\DeclareMathOperator{\Reg}{\mathbf{SSC}}
\DeclareMathOperator{\dt}{\text{dt}}
\DeclareMathOperator{\ds}{\text{ds}}
\newcommand{\lic}{\lim\limits_{{n \to \infty}}}
\newtheorem{theorem}{Theorem}[section]
\newtheorem{corollary}[theorem]{Corollary}
\newtheorem{lemma}[theorem]{Lemma}
\newtheorem{definition}[theorem]{Definition}
\newtheorem{proposition}[theorem]{Proposition}
\newtheorem{example}[theorem]{Example}
\title[Follow-the-Regularized-Leader Routes to Chaos in Routing Games]{
{Follow-the-Regularized-Leader Routes to Chaos in Routing Games}
}
\author[J. Bielawski]{Jakub Bielawski}
\address[J. Bielawski]{Department of Mathematics, Cracow University
  of Economics, Ra\-ko\-wicka~27, 31-510 Krak\'ow, Poland}
\email{jakub.bielawski@uek.krakow.pl}
\author[T. Chotibut]{Thiparat Chotibut}
\address[T. Chotibut]{
Chula Intelligent and Complex Systems, Department of Physics, Faculty of Science, Chulalongkorn University, Bangkok 10330, Thailand.}
\email{Thiparat.C@chula.ac.th, thiparatc@gmail.com}
\author[F. Falniowski]{Fryderyk Falniowski}
\address[F. Falniowski]{Department of Mathematics, Cracow University
  of Economics, Ra\-ko\-wicka~27, 31-510 Krak\'ow, Poland}
\email{falniowf@uek.krakow.pl}
\author[G. Kosiorowski]{Grzegorz Kosiorowski}
\address[G. Kosiorowski]{Department of Mathematics, Cracow University
  of Economics, Ra\-ko\-wicka~27, 31-510 Krak\'ow, Poland}
\email{grzegorz.kosiorowski@uek.krakow.pl}
\author[M. Misiurewicz]{Micha{\l} Misiurewicz}
\address[M. Misiurewicz]{Department of Mathematical Sciences, Indiana
  University-Purdue University Indianapolis, 402 N. Blackford
  Street, Indianapolis, IN 46202, USA}
\email{mmisiure@math.iupui.edu}
\author[G. Piliouras]{Georgios Piliouras}
\address[G. Piliouras]{Engineering Systems and Design, Singapore
  University of Technology and Design, 8 Somapah Road, Singapore 487372}
\email{georgios@sutd.edu.sg}
\begin{document}
\maketitle

\markleft{BIELAWSKI, CHOTIBUT, FALNIOWSKI, KOSIOROWSKI, MISIUREWICZ, PILIOURAS}

\begin{abstract}

We study the emergence of chaotic behavior of Follow-the-Regularized Leader (FoReL)  dynamics in games. %Specifically, 
We focus on the effects of increasing the population size or the scale of costs in  congestion games, and generalize recent results on unstable, chaotic behaviors in the Multiplicative Weights Update dynamics \cite{palaiopanos2017multiplicative,Thip18,CFMP2019} to a much larger class of FoReL dynamics.
 We establish that, even in simple linear non-atomic congestion games with two parallel links and \emph{any} fixed learning rate, unless the game is fully symmetric, increasing the population size or the scale of costs causes learning dynamics to become unstable and eventually chaotic, in the sense of Li-Yorke and positive topological entropy. 
 Furthermore, we show the existence of  novel non-standard phenomena such as the coexistence of stable Nash equilibria and chaos in the same game. We also observe the simultaneous creation of a chaotic attractor as another chaotic attractor gets destroyed. 
Lastly, although FoReL dynamics can be strange and non-equilibrating, we prove that the time average still converges to an {\it exact}  equilibrium for any choice of learning rate and any scale of costs.

\end{abstract}

\section{Introduction}

We study the dynamics of online learning in a non-atomic repeated congestion game. Namely, every iteration of the game presents a population (i.e., a continuum of players) with a choice between two strategies, and imposes on them a cost which increases with the fraction of population adopting the same strategy. In each iteration, the players update their strategy accommodating for the outcomes of previous iterations. The structure of cost function here concerns that of the congestion games, which are introduced by Rosenthal \citep{rosenthal73} and are amongst the most studied classes of games. A seminal result of \cite{monderer1996fictitious} shows that congestion games are isomorphic to potential games; as such, numerous learning dynamics are known to converge to Nash equilibria \cite{Even-Dar:2005:FCS:1070432.1070541,Fotakis08,Fischer:2006:FCW:1132516.1132608,Kleinberg09multiplicativeupdates,kleinberg2011load,MS2018,
hoo}.

A prototypical class of online learning dynamics is Follow the Regularized Leader (FoReL)~\cite{shalev,hazan2016introduction}.
  FoReL algorithm includes as special cases ubiquitous meta-algorithms, such as the Multiplicative Weights Update (MWU) algorithm ~\cite{Arora05themultiplicative}.
  Under FoReL, the strategy in each iteration is chosen by minimizing the weighted (by the learning rate) sum of the total cost of all actions chosen by the players and the regularization term.
FoReL dynamics are known to achieve optimal regret guarantees (i.e., be competitive with the best fixed action with hindsight),
as long as they are executed with a highly optimized learning rate; i.e., one that is decreasing with the steepness of the online costs (inverse to the Lipschitz constant of the online cost functions) as well as decreasing with time $T$ at a rate $1/\sqrt{T}$~\cite{shalev}.   

%Congestion and  routing games \cite{rosenthal73} are amongst the most well studied class of games in game theory.  Being isomorphic to potential games \cite{monderer1996fictitious}, congestion games are one of the few classes of games in which a variety of learning dynamics are known to converge to Nash equilibria  \cite{Even-Dar:2005:FCS:1070432.1070541,Fotakis08,Fischer:2006:FCW:1132516.1132608,Kleinberg09multiplicativeupdates,%kleinberg2011load,
%hoo}. Proving convergence to equilibria typically exploits the existence of the potential function that acts as a (strong) Lyapunov function for learning dynamics; this function is strictly decreasing when the system is out-of-equilibrium.

Although precise parameter tuning is perfectly reasonable from the perspective of algorithmic design, it seems implausible from the perspective of  behavioral game theory and modeling.  For example,  experimental and econometric studies  %arguably one of  the most well known learning models  in
 based on a behavioral game theoretic learning model known as 
  Experienced Weighted Attraction (EWA), which includes MWU as a special case, have shown that agents can use much larger learning rates  than those required for the standard regret bounds  to be meaningfully applicable~\cite{EWA1,EWA2,EWA3,camerer2011behavioral}. In some sense, such a tension is to be expected because small and optimized learning rates are designed with system stability and asymptotic optimality in mind, whereas selfish agents care more about short-term rewards which result in larger learning rates and more aggressive behavioral adaptation. Interestingly, recent work on learning in games study exactly such settings of FoReL dynamics with large, fixed step-sizes, showing that vanishing and even constant regret is possible in some game settings~\cite{bailey2019fast,bailey2019finite}.
  
For congestion games, it is reasonable to expect that increased demands and thus larger daily costs should result in steeper behavioral responses, as agents become increasingly agitated at the mounting delays. However, to capture this behavior we need to move away from the standard assumption of effective scaling down of the learning rate. Then, the costs increase and allow more general models that can incorporate non-vanishing regret.     
Thus, in this regime, FoReL dynamics in congestion games cannot be reduced to standard regret based analysis~\cite{blum2006routing}, or even Lyapunov function arguments (e.g.,~\cite{panageas2019multiplicative}), and more refined and tailored arguments will be needed. 

In a series of work ~\cite{palaiopanos2017multiplicative,Thip18,CFMP2019},  the special case of MWU was analyzed under arbitrary population, demands.  In a nutshell, for any fixed learning rate, MWU becomes unstable/chaotic even in small congestion games with just two strategies/paths as long as the total demand exceeds some critical threshold, whereas for  small population sizes it is always convergent.
 \textit{Can we extend our understanding from MWU to more general FoReL dynamics? Moreover, are the results qualitatively similar showing that the dynamic is either convergent for all initial conditions or non-convergent for almost all initial conditions, or can there be more complicated behaviors depending on the choice of the regularizer of FoReL dynamics?}

\begin{figure}
\centering
\includegraphics[width=\textwidth]{./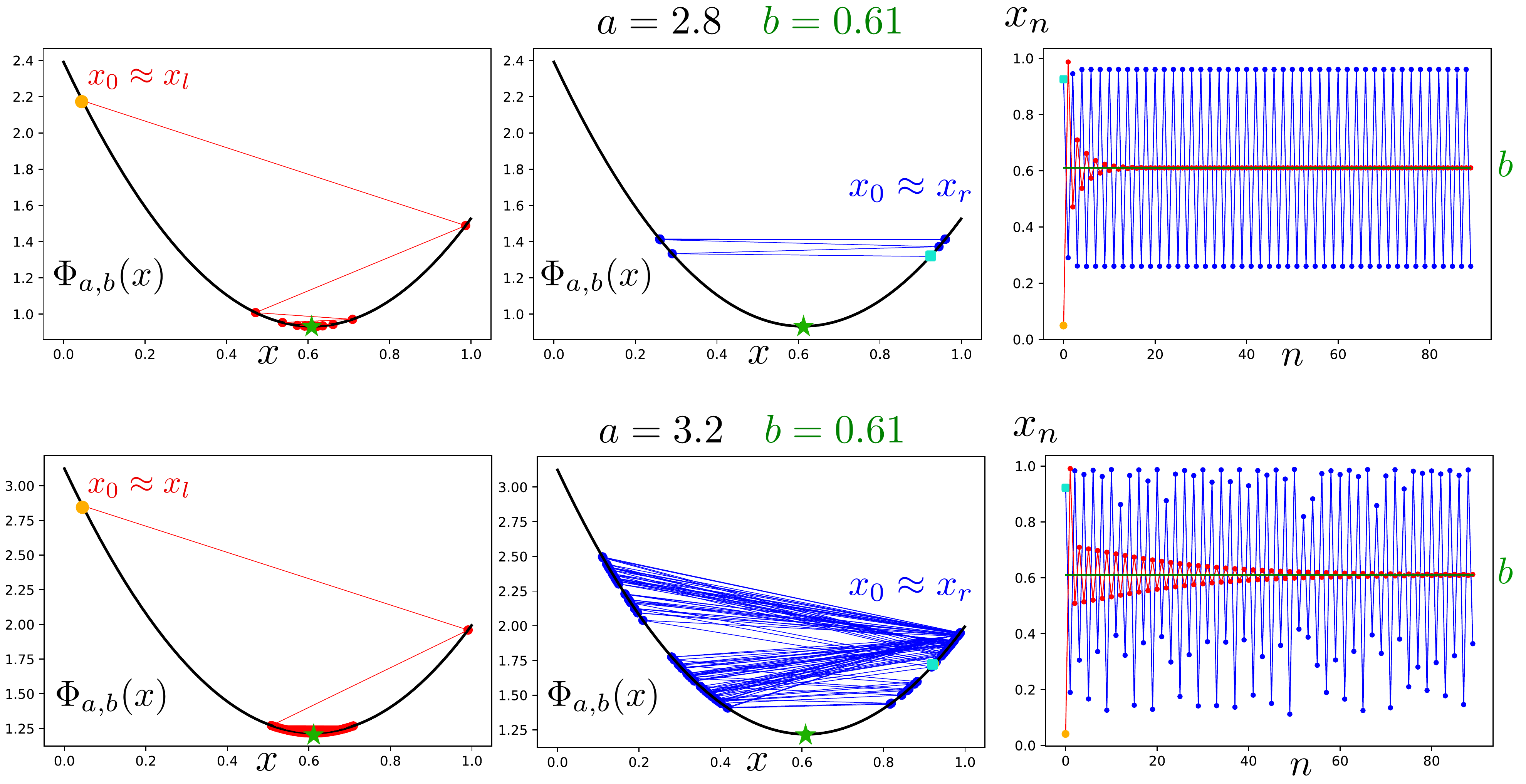}
\caption{Coexistence of locally attracting Nash equilibrium (green), limit cycles, and chaos in the same congestion game. Since congestion game has an associated convex potential (cost) function $\Phi_{a,b}(x) = \frac{a^2}{2}\left( (1-b)x^2 + b(1-x)^2\right)$ with a unique global minimum at the Nash equilibrium $b$, standard learning algorithms such as gradient-like update with a small step size will converge to the equilibrium. However, here we highlight the unusual coexistence of the attracting Nash equilibrium, limit cycles, and chaos for FoReL dynamics with  log-barrier regularizer $r(x)=(1-x)\log(1-x) + x\log(x) -\frac{5}{12}\log(-x^2+x+0.11)$. The right column shows that FoReL dynamics $x_n$ depends on the initial conditions (cyan and orange colors.) Red color encodes the dynamics initialized near the left critical point $x_l$, which converges to the Nash equilibrium $b$. Blue color encodes the dynamics initialized near the right critical point $x_r$, which converge to the limit cycle of period 2 (top), and to chaotic attractors (bottom). Convergence to the Nash equilibrium arises through dynamics that lower the cost function at every successive steps (left column), while convergence to a limit cycle or a chaotic attractor incur large cost, bouncing around in the cost landscape away from the Nash equilibrium. Remarkably, despite being periodic or chaotic, we prove that the time-average of the dynamics converges {\it exactly} to the Nash equilibrium $b$, independent of the interior initial conditions. The bifurcation diagram associated with $b=0.61$ that demonstrates coexistence of multiple attractors in the same game is shown in Fig. \ref{fig:minipathology}}
\label{fig:potential_coexistence}
\end{figure}

{\bf Our model \& results.}  We analyze FoReL-based dynamics with steep regularizers\footnote{Steepness of the regularizer guarantees that the dynamics will be well-defined as a function of the current state of the congestion game. For details, see Section \ref{s:prelim}.} in non-atomic linear congestion games with two strategies. 
This seemingly simple setting will suffice for the emergence of  highly elaborate and unpredictable behavioral patterns.
 For any such  game $G$ and an arbitrarily small but fixed learning rate $\epsilon$, we show that there exists a system capacity $N_0(G, \epsilon)$ such that the system is unstable when the total demand exceeds this threshold. In such case, we observe  \textit{complex non-equilibrating dynamics: periodic orbits of any period and chaotic behavior of trajectories} (Section \ref{s:behavior}). A core technical result is that almost all such congestion games (i.e. unless they are fully symmetric), given sufficiently large demand, will exhibit Li-Yorke chaos and positive topological entropy (Section \ref{ssec:chaos}).
  In the case of games with asymmetric equilibrium flow, the bifurcation diagram is very complex (see Section \ref{s:experimental}).
  Li-Yorke chaos implies that there exists an uncountable set of initial conditions that gets scrambled by the dynamics. Formally,  given any two initial conditions $x(0), y(0)$ in this set,  $\liminf \limits_{t\rightarrow \infty} dist(x(t),y(t))=0$ while $\limsup \limits_{t\rightarrow \infty}dist(x(t),y(t))>0$, meaning trajectories come arbitrarily close together infinitely often but also then move away again.
 In the special case where the two edges have symmetric costs (equilibrium flow is the $50-50\%$ split), the system will still become unstable given large enough demand, but chaos is not possible. Instead, in the unstable regime, all but a measure zero set of initial conditions gets attracted by periodic orbits of period two which are symmetric around the equilibrium. 
  Furthermore, we construct formal criteria for when the Nash equilibrium flow is globally attracting. For such systems we can prove their equilibration and thus social optimality  even when standard regret bounds are not applicable (Section \ref{ssec:Nash}). Also, remarkably, whether the system is equilibrating or chaotic, we prove that the time-average flows of FoReL dynamics exhibit regularity and \textit{always converge exactly to the Nash equilibrium} (Section \ref{ssec:Cesaro}). 
  
 In Section \ref{s:experimental}, for the first time, to our knowledge, we report strange dynamics arising from FoReL in congestion games. Firstly, we numerically show that for FoReL dynamics a locally attracting Nash equilibrium and chaos can coexist, see Figure \ref{fig:minipathology}. This is also formally proven in Section \ref{sec:coexistsence}. Given the prominence of local stability analysis to equilibria for numerous game theoretic settings which are widely used in Artificial Intelligence, such as Generative Adversarial Networks (GANs), e.g., \cite{goodfellow2014generative,mescheder2018training,liang2019interaction,nagarajan2017gradient,yaz2018unusual}, we believe that this result is rather important as it reveals that local stability analysis is not sufficient to guard against chaotic behaviors even in a trivial game with one (locally stable) Nash equilibrium! Secondly, Figure \ref{fig:minisimult} reveals that chaotic attractors can be non-robust. Specifically, we show that mild perturbations in the parameter can lead to the  
  destruction of one complex attractor while another totally distinct complex attractor is born!  To the best of our knowledge, these phenomena have never been reported, and thus expanding our understanding of the range of possible behaviors in game dynamics. Several more examples of complex phenomena are provided in Section \ref{s:experimental}.
Finally, further calculations for entropic regularizers can be found in Appendix \ref{sec: entropies}.
 
Our findings suggest that the chaotic behavior of players using Multiplicative Weights Update algorithm in congestion games (see results from \cite{palaiopanos2017multiplicative,Thip18,CFMP2019}) is not an exception but the rule. Chaos is robust and can be seen for a vast subclass of online learning algorithms. In particular, our results apply to an important subclass of regularizers, of generalized entropies, which are widely used concepts in information theory, complexity theory, and statistical mechanics \cite{Csiszar2008,Tsallis, SKZ}. 
Steep functions \cite{MZ,MS2016,MS2018} and generalized entropies are also often used as regularizers in game-theoretic setting \cite{CGM, MS2018, BMSS}.
In particular, Havrda-Charv\'{a}t-Tsallis entropy-based dynamics was studied, for instance, in \cite{Harper, Karev13}.
Lastly, the emergence of chaos is clearly a hardness type of result. Such results only increase in strength the simpler the class of examples is. Complicated games are harder to learn and it is harder for players to coordinate on an equilibrium. Thus, in more complicated games one should expect even more complicated, unpredictable behaviors.

\section{Model}
\label{s:prelim}
We consider a two-strategy \emph{congestion game} (see~\cite{rosenthal73}) with a continuum of players (agents), where all of the players apply the \emph{Follow the Regularized Leader} (FoReL) algorithm to update their strategies \cite{shalev}. Each of the players controls an infinitesimally small fraction of the flow.
  We assume that the total flow of all the players is equal to $N$. % normalized to $1$.
  We denote the fraction of the players adopting the first strategy at time $n$ as $x_n$. The second strategy is then chosen by $1-x_n$ fraction of the players. This model  encapsulates how a large population of commuters selects between the two alternative paths that connect the initial point to the end point.
When a large fraction of the players adopt the same
strategy, congestion arises, and the cost of choosing the same strategy increases.

{\bf Linear congestion games}:
We %primarily
focus on linear cost functions. Specifically, the cost of each path (link, route, or strategy) is proportional to the \emph{load}. By denoting $c_j$ the cost of selecting the
strategy $j$ (when $x$ relative fraction of the agents choose the first strategy), 
%$\alpha,\beta>0$, we obtain
\begin{equation}\label{cost}
%\begin{aligned}
c_1(x)=\alpha N x, \hspace{50pt} c_2(1-x)=\beta N (1-x),
%\end{aligned}
\end{equation}
where $\alpha,\beta>0$ are the coefficients of proportionality. Without loss of generality we will assume throughout the paper that $\alpha+\beta=1$. Therefore, the values of $\alpha$ and $\beta=1-\alpha$ indicate how different the path costs  are from each other.

%Our analysis on the emergence of bifurcations, limit cycles and chaos will carry over immediately to the cost functions of the form $\alpha x+\gamma$.
A quantity of interest is the value of the equilibrium split; i.e., the relative fraction of players using the first strategy at equilibrium. The first benefit of this formulation is that the fraction of agents using each strategy at equilibrium is independent of the flow $N$.
The second benefit is that, independent of $\alpha$, $\beta$ and $N$, playing Nash equilibrium results in the optimal social cost, which is the point of contact with the Price of Anarchy research \cite{koutsoupiad_papadimitriou, CFMP2019}.

\subsection{Learning in congestion games with FoReL algorithms}

 We assume that the players at time $n+1$ know the cost of the strategies at
time $n$ (equivalently, the realized flow (split) $(x_n, 1-x_n)$) and update
their choices according to the \emph{Follow the Regularized Leader} (FoReL) algorithm. Namely, in the period $n+1$ the players choose the first strategy with probability $x_{n+1}$ such that:

\begin{align}\label{pt1}
\begin{split}
 x_{n+1} &= \arg\min_{x\in(0,1)}\left( \varepsilon \sum_{j\leq n}\left[c_1 (x_j) \cdot x + c_2 (1-x_j) \cdot (1-x) \right] + R(x,1-x) \right)\\
  &= \arg\min_{x\in(0,1)}\left( \varepsilon \sum_{j\leq n}\left[\alpha N \cdot x_j \cdot x + \beta N \cdot (1-x_j) \cdot (1-x) \right] + R(x,1-x) \right),
  \end{split}
\end{align}
where $c_1 (x_j) \cdot x + c_2 (1-x_j) \cdot (1-x)$ is a~total cost that is inflicted on the population of agents playing against the mix $(x,1-x)$ in period $j$, while $R\colon (0,1)^2\mapsto \mathbb{R}$ is a \emph{regularizer} which represents a ``risk penalty'': namely, that term would penalize abrupt changes of strategy based on a small amount of data from previous iterations of the game. The existence of a regularizer rules out strategies that focus too much on optimizing with respect to the history of our game. 
A weight coefficient $\varepsilon>0$ of our choosing is used to balance these two terms and may be perceived as a propensity to learn and try new strategies based on new information: the larger $\varepsilon$ is, the faster the players learn and the more eager they are to update their strategies. Commonly adopted as a standard assumption, the learning rate $\varepsilon$ can be regarded as a small, fixed constant in the following analysis but its exact value is not of particular interest. Our analysis/results holds for any fixed choice of $\varepsilon$.

Note that FoReL can also be regarded as an instance of an exploration-exploitation dynamics under the multi-armed bandits framework in online learning \cite{Zhao2019}. In the limit $\varepsilon \gg 1$ such that (\ref{pt1}) is well approximated by the minimization of the cumulative expected cost 
\[\sum_{j\leq n}\left[c_1 (x_j) \cdot x + c_2 (1-x_j) \cdot (1-x) \right] = \sum_{j\leq n} c_2 (1-x_j) + \left(\sum_{j\leq n}\left[c_1(x_j) - c_2(1-x_j)\right] \right)x,\]
the minimization yields 
\[
 x_{n+1}=
    \begin{cases}
      0, & \sum_{j\leq n}\left[c_1(x_j) - c_2(1-x_j)\right] > 0, \\
      1, & \sum_{j\leq n}\left[c_1(x_j) - c_2(1-x_j)\right] < 0.
          \end{cases}
\]         
Namely, the strategy that incurs the {\it least} cumulative cost in the past time horizon is selected with probability 1. This term thus represents {\it exploitation} dynamics in reinforcement learning and multi-armed bandits framework. In the opposite limit when $\epsilon \ll 1$, (\ref{pt1}) is well approximated by the minimization of the regularizer $R(x,1-x).$ For the  Shannon entropy regularizer $R(x,1-x) = -H_S(x,1-x)= x\log x + (1-x)\log (1-x)$ that results in the Multiplicative Weight Update algorithm (see the details in Appendix \ref{sec: entropies} and Sec. \ref{sec: FORLdynamics}), its minimization  yields 
\[
 x_{n+1}= (1-x_{n+1})= 1/2.
\]   
The entropic regularization term tends to explore every strategy with equal probabilities, neglecting the information of the past cumulative cost. Thus, this regularization term corresponds to {\it exploration} dynamics.
Therefore, $\varepsilon$ adjusts the tradeoff between {\it exploration} and {\it exploitation}. The continuous time variant of  (\ref{pt1}) with the Shannon entropy regularizer has been studied as models of collective adaption \cite{sato2003,sato2005}, also known as Boltzmann $Q$ learning \cite{kianercy2012}, in which the exploitation term is interpreted as behavioral {\it adaptation} whereas the exploration term represents {\it memory loss}. More recent continuous-time variants study generalized entropies as regularizers, leading to a larger class of dynamics called Escort Replicator Dynamics \cite{Harper} which was analyzed extensively in \cite{MS2016,MS2018}.

Motivated by the continuous-time dynamics with generalized entropies, we extend FoReL discrete-time dynamics (\ref{pt1}) to a larger class of regularizers. For a given regularizer $R$, we define an auxiliary function:
\begin{equation}
r\colon (0,1)\ni x \mapsto R(x,1-x) \in \mathbb{R}.
\end{equation}
We restrict the analysis to a FoReL class of regularizers for which the dynamics implied by the algorithm is well-defined. Henceforth, we assume that $R$ is a steep symmetric convex regularizer, namely $R\in \Reg$, where:
%\begin{equation}
\[ \hspace{-0.2in} \label{reg} \Reg=\left\{R\in \mathcal{C}^2((0,1)^2):\ \forall_{(x,y)\in (0,1)^2} R(y,x)=R(x,y);\ \forall_{x\in(0,1)} r''(x)>0;\ \lim_{x\to 0^+}r'(x)=-\infty \right\} .\]%\end{equation}
These conditions on regularizers are not overly restrictive: the assumptions for convexity and symmetry of the regularizer are natural, and if $\lim_{x\to 0}r'(x)$ is finite, then the dynamics of $x_n$ from \eqref{pt1} will not be well-defined. %Standard regularizers belong to $\Reg$.

Many well-known and widely used regularizers like (negative) Arimoto entropies (Shannon entropy, Havrda-Charv\'{a}t-Tsallis (HCT) entropies and log-barrier being most famous ones) and (negative) R\'{e}nyi entropies, under mild assumptions, belong to $\Reg$ (see Appendix \ref{sec: entropies}). A standard non-example is the square of the Euclidean norm $R(x,1-x)=x^2+(1-x)^2$. %\footnote{Definitions, derivations and discussion can be found in \ref{sec: entropies}.} 

%\section{Limit cycles and average behavior} \label{s:sresults}
\section{The dynamics introduced by FoReL} \label{sec: FORLdynamics}

Let $R \in \Reg$. Assume that up to the iteration $n>0$ the trajectory $(x_0, x_1, \ldots, x_{n-1})$ was established by  \eqref{pt1}. Then
\[
 x_{n} = \arg\min_{x\in(0,1)}\left( N\varepsilon \sum_{j\leq n-1}\left[\alpha \cdot x_j \cdot x + \beta \cdot (1-x_j) \cdot (1-x) \right] + r(x) \right).
\]
First order condition yields
\[
 r'(x_n) = - N\varepsilon \sum_{j\leq n-1}\left[\alpha \cdot x_j - \beta \cdot (1-x_j) \right].
\]
We know that $r$ is convex, therefore the sufficient and necessary condition for $x_{n+1}$ to satisfy \eqref{pt1} takes form:
\begin{align}\label{diff_eq}
\begin{split}
 r'(x_{n+1}) &= - N\varepsilon \sum_{j\leq n}\left[\alpha \cdot x_j - \beta \cdot (1-x_j) \right]
 = r'(x_n) -  N\varepsilon \left[\alpha \cdot x_n - \beta \cdot (1-x_n) \right] \\
 &=  r'(x_n) - N\varepsilon \left[ x_n - \beta \right].
\end{split}
\end{align}

%Although the above considerations are valid for $n>0$, the formula \eqref{diff_eq} is also valid for $t=0$. 

We define $\Psi\colon (0,1)\ni x \mapsto -r'(x)\in \mathbb{R}$. Table \ref{tab:table1} depicts functions $\Psi$ for different entropic regularizers\footnote{By substituting the negative Shannon entropy as $r$ in (4), that is $r(x)=R(x,1-x)=x\log x+(1-x)\log (1-x)$, we obtain the Multiplicative Weights Update algorithm.}. 
Before proceeding any further, we need to establish crucial properties of the function $\Psi$.

\begin{table}[h!]
  \begin{center}
    \caption{Homeomorphisms $\Psi$ for regularizers from $\Reg$.}
    \label{tab:table1}
    \begin{tabular}{c|c|c}
      \textbf{regularizer} &$r(x)$& $\Psi (x)$\\
      \hline
      Shannon & $x\log x+(1-x)\log(1-x)$& $\log \frac{1-x}{x}$\\
      \hline
Havrda-Charv\'{a}t-Tsallis, $q\in (0,1)$ & $\frac{1}{1-q}(1-x^q-(1-x)^q)$ &$\frac{q}{1-q}\left(x^{q-1}-(1-x)^{q-1}\right)$\\
      \hline
      R\'{e}nyi, $q\in (0,1)$ &$\frac{1}{q-1}\log (x^q+(1-x)^q)$ &$\frac{q}{1-q}\cdot \frac{x^{q-1}-(1-x)^{q-1}}{x^{q}+(1-x)^{q}}$\\
      \hline
      log-barrier & $-\log x -\log(1-x)$ &$ \frac{1}{x}-\frac{1}{1-x}$\\
%      \hline
    \end{tabular}
  \end{center}
\end{table}

\begin{proposition} \label{psiproperties}
Let $\Psi$ be a function derived from a regularizer from $\Reg$. Then
\begin{enumerate}[i)]
\item \label{3i} $\Psi (1-x)=-\Psi (x)$ for $x\in (0,1)$.
\item \label{3ii} $\Psi$ is a homeomorphism, $\lim\limits_{x\to 0^+}\Psi(x)=\infty$, and $\lim\limits_{x\to 1^-} \Psi(x)=-\infty$.
\end{enumerate}
\end{proposition}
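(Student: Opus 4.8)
The plan is to derive both claims directly from the four defining properties of a regularizer $R\in\Reg$ — the $\mathcal{C}^2$ smoothness, the symmetry $R(y,x)=R(x,y)$, the strict convexity $r''>0$, and the steepness $\lim_{x\to 0^+}r'(x)=-\infty$ — together with the definition $\Psi=-r'$.

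For \eqref{3i}, I would first observe that the symmetry of $R$ forces $r$ to be symmetric about $1/2$: indeed $r(1-x)=R(1-x,x)=R(x,1-x)=r(x)$. Differentiating this identity in $x$ gives $-r'(1-x)=r'(x)$, and substituting $\Psi=-r'$ yields $\Psi(1-x)=-\Psi(x)$ for every $x\in(0,1)$.

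For \eqref{3ii}, since $r$ is $\mathcal{C}^2$ on $(0,1)$ with $r''>0$, the derivative $r'$ is continuous and strictly increasing, so $\Psi=-r'$ is continuous and strictly decreasing, hence injective. The steepness hypothesis gives $\lim_{x\to 0^+}\Psi(x)=-\lim_{x\to 0^+}r'(x)=+\infty$, and the behaviour at the other endpoint follows from \eqref{3i}: as $x\to 1^-$ we have $1-x\to 0^+$, so $\Psi(x)=-\Psi(1-x)\to-\infty$. A continuous strictly monotone function on an interval whose endpoint limits are $+\infty$ and $-\infty$ is, by the intermediate value theorem, a bijection onto $\mathbb{R}$, and the inverse of a continuous monotone bijection between intervals is automatically continuous; hence $\Psi$ is a homeomorphism of $(0,1)$ onto $\mathbb{R}$.

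I do not expect a genuine obstacle here; the argument is essentially a bookkeeping exercise. The one point worth stating carefully is that injectivity together with the endpoint limits does not by itself yield the homeomorphism property: surjectivity onto all of $\mathbb{R}$ must be extracted from the intermediate value theorem, after which continuity of the inverse is the standard fact about strictly monotone bijections of intervals. Everything needed is already packaged into the definition of $\Reg$, so no further hypotheses on $R$ are required.
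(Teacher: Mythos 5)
Your proposal is correct and follows essentially the same route as the paper: symmetry of $R$ yields the antisymmetry of $\Psi$ about $1/2$ (you derive it via $r(1-x)=r(x)$ and differentiate, the paper via the partial derivatives of $R$ — the same computation packaged differently), and strict convexity plus steepness give monotonicity and the endpoint limits. Your added detail on surjectivity via the intermediate value theorem and continuity of the inverse is a fuller write-up of what the paper leaves implicit, not a different argument.
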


\begin{proof}
Due to the condition $R(x,y)=R(y,x)$, we have that $\frac{\partial R}{\partial x}(x,1-x)=\frac{\partial R}{\partial y}(1-x,x)$. Thus, if $\varphi(x)=1-x$, then:
\begin{align*}
 \Psi(1-x) &= -[r(\varphi(x))]' = -\frac{\partial R}{\partial x}(1-x,x)+\frac{\partial R}{\partial y}(1-x,x) \\
 &= -\frac{\partial R}{\partial y}(x,1-x)+\frac{\partial R}{\partial x}(x,1-x) = r'(x) = -\Psi(x).
\end{align*}
This implies  \eqref{3i}.
Moreover  $\Psi'(x)=-r''(x)<0$. Thus, $\Psi$ is decreasing.
 \[\lim_{x\to 0^+}\Psi(x)=-\lim_{x\to 0^+}r'(x)=\infty.\]
From \eqref{3i} we obtain that $\lim_{x\to 1^-}\Psi(x)=-\infty$.
\end{proof}

By Proposition \ref{psiproperties}.\ref{3ii}, $\Psi$ is a homeomorphism between $(0,1)$ and $\mathbb{R}$. 

After substituting \begin{equation}\label{ab} a=N\varepsilon, \;\; b=\beta%\frac{\beta}{\alpha+\beta}
\end{equation} we obtain from \eqref{diff_eq} a~general formula for the dynamics
\begin{equation} \label{eq:dyn1}
x_{n+1}=\Psi^{-1}\left( \Psi(x_n) + a(x_n-b) \right),
\end{equation}
where $a>0, b\in (0,1)$.
Thus, we introduce $f_{a,b}\colon [0,1] \mapsto [0,1]$ as 

\begin{equation} \label{eq:dyn2}
f_{a,b}(x)=\begin{cases} 0,\ \ x=0 \\ \Psi^{-1}\left( \Psi(x) + a(x-b) \right),\ \ x\in (0,1) \\ 1,\ \ x=1  \end{cases}.
\end{equation}

By the properties of $\Psi$, $f_{a,b}\colon [0,1] \mapsto [0,1]$ is continuous, and \eqref{eq:dyn2} defines a discrete dynamical system emerging from the FoReL algorithm for the pair of parameters $(a,b)$.

\begin{lemma} \label{lem33}
The following properties hold:
\begin{enumerate}[i)]
\item $f_{a,b}(x)>x$ if and only if $x<b$ and $f_{a,b}(x)<x$ if and only if $x>b$.
\item \label{commute} If $\varphi \colon (0,1)\mapsto (0,1)$ is given by $\varphi(x)=1-x$, then
%\label{10'}  
$\varphi \circ f_{a,b} = f_{a,1-b} \circ \varphi.$
%\end{equation}
\item \label{generalinterval} Under the dynamics defined by \eqref{eq:dyn2}, there exists a closed invariant and globally attracting interval $I\subset (0,1)$.
\end{enumerate}

\end{lemma}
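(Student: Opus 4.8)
The plan is to treat (i) and (ii) by direct manipulation of $\Psi$, and to build the attracting interval in (iii) by hand out of two one-sided extrema of $f_{a,b}$, the crucial point being that the boundary fixed points $0$ and $1$ are repelling.

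For (i), since $\Psi^{-1}$ is strictly decreasing (Proposition~\ref{psiproperties}.\ref{3ii}), $f_{a,b}(x)=\Psi^{-1}(\Psi(x)+a(x-b))$ exceeds $x=\Psi^{-1}(\Psi(x))$ exactly when $\Psi(x)+a(x-b)<\Psi(x)$, i.e. when $a(x-b)<0$; as $a>0$ this says $x<b$, and the reverse inequality is symmetric. In particular $b$ is the unique fixed point of $f_{a,b}$ in $(0,1)$. For (ii), I would fix $x\in(0,1)$, write $u=\Psi(x)+a(x-b)$ so that $f_{a,b}(x)=\Psi^{-1}(u)$, and use $\Psi(1-x)=-\Psi(x)$ (Proposition~\ref{psiproperties}.\ref{3i}) to get $f_{a,1-b}(1-x)=\Psi^{-1}\big(\Psi(1-x)+a((1-x)-(1-b))\big)=\Psi^{-1}(-u)$. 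The same oddness of $\Psi$ yields $\Psi^{-1}(-u)=1-\Psi^{-1}(u)$ (put $z=\Psi^{-1}(u)$; then $\Psi(1-z)=-u$, so $1-z=\Psi^{-1}(-u)$). Hence $\varphi(f_{a,b}(x))=1-f_{a,b}(x)=\Psi^{-1}(-u)=f_{a,1-b}(1-x)=f_{a,1-b}(\varphi(x))$, i.e. $\varphi\circ f_{a,b}=f_{a,1-b}\circ\varphi$.

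For (iii) I would set $c:=\min_{x\in[b,1]}f_{a,b}(x)$ and $d:=\max_{x\in[0,b]}f_{a,b}(x)$, both attained by continuity on a compact interval. Since $f_{a,b}$ maps $(0,1)$ into $(0,1)$ while $f_{a,b}(0)=0$ and $f_{a,b}(1)=1$ (see \eqref{eq:dyn2}), one gets $c>0$ and $d<1$, and since $f_{a,b}(b)=b$ one gets $c\le b\le d$. Thus $I:=[c,d]$ is a closed subinterval of $(0,1)$ containing $b$ (possibly the single point $\{b\}$). Forward invariance $f_{a,b}(I)\subset I$ then follows by splitting $I=[c,b]\cup[b,d]$: on $[c,b]$, part (i) gives $f_{a,b}(x)\ge x\ge c$ while $[c,b]\subset[0,b]$ gives $f_{a,b}(x)\le d$; on $[b,d]$, part (i) gives $f_{a,b}(x)\le x\le d$ while $[b,d]\subset[b,1]$ gives $f_{a,b}(x)\ge c$.

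For global attraction, fix $x_0\in(0,1)$ with orbit $(x_n)$. If some $x_n$ lands in $I$, invariance keeps the orbit in $I$ forever. Otherwise every $x_n$ lies in $(0,c)\cup(d,1)$; but a point of $(d,1)\subset(b,1)$ is mapped into $[c,1)$ (by the definition of $c$ and part (i)) and a point of $(0,c)\subset(0,b)$ is mapped into $(0,d]$ (by the definition of $d$ and part (i)), so the orbit cannot move between the two pieces and is trapped in one of them. There part (i) makes it monotone and bounded, hence convergent to an interior fixed point, which can only be $b$; this forces the relevant endpoint of $I$ to equal $b$ and gives $x_n\to b\in I$. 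In every case $\mathrm{dist}(x_n,I)\to 0$, so $I$ is globally attracting. The main obstacle I anticipate is exactly this boundary bookkeeping in (iii): certifying that orbits cannot escape to $0$ or $1$, and absorbing the degenerate regime in which $b$ is (locally, hence in those cases globally) attracting and $I$ collapses to $\{b\}$. The observation that $(0,c)$ and $(d,1)$ do not communicate, together with the principle ``a monotone bounded orbit converges to a fixed point, and the only interior fixed point is $b$,'' is what makes this argument go through cleanly.
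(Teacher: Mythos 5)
Your parts (i) and (ii) are exactly the paper's argument: monotonicity of $\Psi^{-1}$ for (i), and the identity $\Psi^{-1}(-y)=1-\Psi^{-1}(y)$ (from oddness of $\Psi$ about $1/2$) for (ii). For (iii) your construction is correct but genuinely different from the paper's. You take $I=[c,d]$ with $c=\min_{[b,1]}f_{a,b}$ and $d=\max_{[0,b]}f_{a,b}$, prove invariance from part (i), and get attraction by a trapping argument: an orbit avoiding $I$ cannot cross between $(0,c)$ and $(d,1)$, is monotone there by (i), and must converge to the unique interior fixed point $b$, which forces the corresponding endpoint of $I$ to be $b$. The paper instead builds a symmetric interval $[\delta,1-\delta]$ (choosing $\delta_1$ so that points outside strictly decrease their distance to $1/2$, and $\delta_2$ to absorb the image of the middle part) and proves attraction by contradiction: for an orbit never entering $I$ the distance to $I$ decreases to some $\epsilon$, so its $\omega$-limit set sits in the two-point set $\{\delta-\epsilon,1-\delta+\epsilon\}$, which cannot be invariant. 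Your route is more elementary and exhibits the natural trapping region; what the paper's construction buys is a \emph{nondegenerate} interval that is moreover \emph{absorbing} (every orbit of $(0,1)$ actually enters it), which is the phrasing invoked later in the proof of Theorem \ref{thm:cesaro}. With your $I$, in the regime where $f_{a,b}$ is monotone near $b$ (small $a$) the interval may collapse to $\{b\}$ and orbits can converge to $b$ without ever entering $I$; this still satisfies ``globally attracting'' as stated, and the Ces\`aro argument survives because orbits are eventually confined to a compact subinterval of $(0,1)$, but it is a slightly weaker deliverable than the paper's absorbing interval, so it is worth noting this distinction explicitly if you use your version.
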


\begin{proof}
 
 We obtain (i) directly from \eqref{eq:dyn2} and the fact that $\Psi$ is decreasing.

 $\Psi$ is a homeomorphism, thus if $y = \Psi(x)$ for some $x\in (0,1)$, then
\[
 y = \Psi(\Psi^{-1}(y)) = -\Psi(1-\Psi^{-1}(y)).
\]
Hence,
\[
 \Psi^{-1}(-y) = 1-\Psi^{-1}(y).
\]
Now let $x \in (0,1)$. Then
\begin{align*}
 (\varphi \circ f_{a,b})(x) &= 1 - f_{a,b}(x) = 1 - \Psi^{-1}\left( \Psi(x) + a(x-b) \right)
 = \Psi^{-1}\left( -\Psi(x) - a(x-b) \right)\\
 &= \Psi^{-1}\left( \Psi(1-x) + a\big( (1-x)-(1-b) \big) \right) = (f_{a,1-b} \circ \varphi) (x),
\end{align*}
and (ii) follows.

 By (i), $f_{a,b}(x)>x$ for $x\in (0,b)$  and   $f_{a,b}(x)<x$ for $x\in (b,1)$. Therefore, there exists $0<\delta_1<\min \{b,1-b\}$ such that $|\frac{1}{2}-x|>|\frac{1}{2}-f_{a,b}(x)|$ for $x\in (0,1) \setminus (\delta_1,1-\delta_1)$. There exists also $\delta_2>0$ such that $f_{a,b}([\delta_1,1-\delta_1])\subset (\delta_2,1-\delta_2)$. Set $\delta=\min \{\delta_1, \delta_2\}$. Then, the interval $I=[\delta,1-\delta ]$ is invariant.

To complete the proof of (iii) we need to show that $I$ is attracting. Assume that $x\in (0,1)\setminus I$ is such that its $f_{a,b}$-trajectory never enters $I$. Since $\delta\leq \delta_1$, the distance between $f_{a,b}^n(x)$ and $I$ (that is, $d_I(f_{a,b}^n(x))$, where $d_I(z)=\delta-z$ for $z \in [0,\delta]$ and $d_I(z)=z-(1-\delta)$ for $z \in [1-\delta,1]$) is decreasing and $\delta<f(\delta)<1-\delta$. Sequence $d_I(f_{a,b}^n(x))$ is decreasing and bounded from below by $0$, so it is convergent to some $\epsilon\geq 0$. Therefore, the $\omega$-set of the trajectory of $x$ is a non-empty subset of $d_I^{-1}(\{ \epsilon \} )=I_{\epsilon}=\{\delta-\epsilon, 1-\delta+\epsilon \}$. However, no non-empty subset of $I_{\epsilon}$ can be invariant (and thus, can be an $\omega$-set of a trajectory), because $\delta-\epsilon \leq \delta_1$ and thus $f_{a,b}(I_{\epsilon}) \subset (\delta-\epsilon, 1-\delta+\epsilon)$, and $f_{a,b}(I_{\epsilon}) \cap I_{\epsilon}=\emptyset$. By this contradiction, such $x$ does not exist, thus $I$ is globally attracting.

\end{proof}

 \section{Average behavior --- Nash equilibrium is Ces\'{a}ro attracting}
 \label{ssec:Cesaro}
 
%As we know from the previous section $b$ is attracting as long as $a<-2\Psi'(b)$. Although $b$ is not attracting for $a>-2\Psi'(b)$ even then the orbits of our dynamics maintain 
We start by studying asymptotic behavior by looking on the average behavior of orbits.
We will show that the orbits of our dynamics exhibit regular average behavior known as Ces\'{a}ro attraction to the Nash equilibrium $b$.
\begin{definition}
For an interval map $f$ a point $p$ is Ces\'aro attracting if there is a neighborhood $U$ of $p$ such that for every $x \in U$ the averages $\frac{1}{n} \sum_{k=0}^{n-1} f^k(x)$ converge to $p$.
\end{definition}

\begin{theorem}[Ces\'{a}ro attracting]
\label{thm:cesaro}
For every $a>0$, $b\in (0,1)$ and $x_0\in (0,1)$ we have

\begin{equation} \label{eq:Cesaro}
\lic \frac{1}{n} \sum_{k=0}^{n-1} f_{a,b}^k(x_0)=b.
\end{equation}

\end{theorem}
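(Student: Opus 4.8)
The plan is to exploit the telescoping structure of the dynamics \eqref{diff_eq}. Rewriting that recursion, we have $\Psi(x_{n+1}) - \Psi(x_n) = a(x_n - b)$, so summing from $n=0$ to $n=N-1$ gives a telescoping cancellation:
\[
\Psi(x_N) - \Psi(x_0) = a \sum_{n=0}^{N-1} (x_n - b) = a\left( \sum_{n=0}^{N-1} x_n - Nb \right).
\]
Dividing by $aN$ yields
\[
\frac{1}{N} \sum_{n=0}^{N-1} x_n - b = \frac{\Psi(x_N) - \Psi(x_0)}{aN}.
\]
Thus the whole theorem reduces to showing that $\Psi(x_N)/N \to 0$ as $N \to \infty$ (the term $\Psi(x_0)/(aN)$ clearly vanishes since $x_0 \in (0,1)$ is fixed, so $\Psi(x_0)$ is a finite constant).

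The key step, then, is to control the growth of $\Psi(x_N)$. The natural tool is Lemma \ref{lem33}.\ref{generalinterval}: there is a closed interval $I = [\delta, 1-\delta] \subset (0,1)$ that is globally attracting. Since $\Psi$ is a continuous function on the compact set $I$, it is bounded there, say $|\Psi(x)| \le M$ for all $x \in I$. Once the orbit enters $I$ it stays (invariance), so from some index $n_0$ onward we have $|\Psi(x_n)| \le M$, hence $|\Psi(x_N)|/N \le M/N \to 0$. One should be slightly careful about the transient phase before the orbit enters $I$: for $n < n_0$ the values $\Psi(x_n)$ could in principle be large, but there are only finitely many such terms and we only need $\Psi(x_N)$ for the single index $N$; for all $N \ge n_0$ the bound $M/N$ applies, which is all that is needed for the limit. (If one wanted to avoid even discussing $n_0$, one could instead observe directly from part (i) of Lemma \ref{lem33} and the monotonicity of $\Psi$ that $|\tfrac12 - x_n|$ is eventually non-increasing along the orbit and hence $x_n$ stays bounded away from $\{0,1\}$, giving the same conclusion.)

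Putting these together: for $N \ge n_0$,
\[
\left| \frac{1}{N} \sum_{n=0}^{N-1} x_n - b \right| = \frac{|\Psi(x_N) - \Psi(x_0)|}{aN} \le \frac{M + |\Psi(x_0)|}{aN} \xrightarrow[N\to\infty]{} 0,
\]
which is exactly \eqref{eq:Cesaro}. I expect the main (minor) obstacle to be the bookkeeping around the transient: making sure the argument is valid for the orbit started at an arbitrary interior point $x_0$, not just one already in $I$. This is handled cleanly by invoking global attractiveness together with invariance of $I$ — the orbit enters $I$ at some finite time and never leaves — so that $\Psi(x_N)$ is bounded for all large $N$. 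No delicate ergodic-theoretic input is required; the result is essentially a consequence of the exact telescoping identity built into the FoReL update plus compactness.
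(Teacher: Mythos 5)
Your proof is correct and follows essentially the same route as the paper: the telescoped identity $\Psi(x_N)-\Psi(x_0)=a\sum_{n=0}^{N-1}(x_n-b)$ is exactly the paper's formula \eqref{iter} written additively, and both arguments then conclude by using the globally attracting invariant interval of Lemma \ref{lem33}.\ref{generalinterval} to bound $\Psi(x_N)$ (the paper sandwiches it between $\Psi(1-\delta)$ and $\Psi(\delta)$, you bound it by the maximum of $|\Psi|$ on $[\delta,1-\delta]$), so the error term is $O(1/N)$ and the Cesàro averages converge to $b$.
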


\begin{proof}

Fix $x_0\in (0,1)$ and let $x_k=f_{a,b}^k(x_0)$.

From (\ref{eq:dyn2}) we get by induction that

\begin{equation} \label{iter}
x_{n}=f_{a,b}(x_{n-1})=\Psi^{-1}\left( \Psi(x_0) + a\left(\sum_{k=0}^{n-1} (x_k -b) \right) \right).
\end{equation}

By  Lemma \ref{lem33}.\ref{generalinterval} there is $\delta>0$ such that there exists a closed, globally absorbing and invariant interval $I\subset (\delta, 1-\delta)$. Thus, for sufficiently large $n$

\[
\delta<x_n=\Psi^{-1}\left( \Psi(x_0) + a\left(\sum_{k=0}^{n-1} (x_k -b) \right)\right) < 1-\delta.
\]

$\Psi$ is decreasing, thus

\[
\Psi(\delta)> \Psi(x_0) + a \left(\sum_{k=0}^{n-1} (x_k -b) \right)> \Psi (1-\delta).
\]
Therefore
\[
\frac{1}{an}  \left(\Psi(\delta)-\Psi(x_0)\right)> \frac{1}{n} \sum_{k=0}^{n-1} x_k -b >\frac{1}{an}  \left(\Psi (1-\delta)-\Psi(x_0)\right),
\]
so
\[
 \left| \frac{1}{n} \sum_{k=0}^{n-1} x_k -b \right| < \frac{1}{an} \max\{ \left| \Psi(\delta)-\Psi(x_0) \right|, \left| \Psi(1-\delta)-\Psi(x_0) \right| \}.
\]
Thus, (\ref{eq:Cesaro}) follows.
\end{proof}

\begin{corollary}
 The center of mass of any periodic orbit $\{x_0, x_1, \ldots, x_{n-1}\}$ of $f_{a,b}$ in $(0,1)$, namely
$ \frac{x_0 + x_1 + \ldots + x_{n-1}}{n}$, is equal to $b$.
\end{corollary}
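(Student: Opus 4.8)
The plan is to read this off directly from the iteration formula \eqref{iter} together with injectivity of $\Psi$. Let $\{x_0, x_1, \dots, x_{n-1}\}$ be a periodic orbit of $f_{a,b}$ lying in $(0,1)$, with $x_k = f_{a,b}^k(x_0)$ and $x_n = f_{a,b}^n(x_0) = x_0$. By \eqref{iter} applied at index $n$,
\[
x_0 \;=\; x_n \;=\; \Psi^{-1}\!\left( \Psi(x_0) + a\sum_{k=0}^{n-1} (x_k - b) \right).
\]
Applying $\Psi$ to both sides and using that $\Psi$ is a bijection (Proposition \ref{psiproperties}), we obtain $\Psi(x_0) = \Psi(x_0) + a\sum_{k=0}^{n-1}(x_k - b)$, hence $a\sum_{k=0}^{n-1}(x_k - b) = 0$. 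Since $a = N\varepsilon > 0$, this gives $\sum_{k=0}^{n-1} x_k = nb$, i.e. the center of mass of the orbit equals $b$.

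Alternatively, one can invoke Theorem \ref{thm:cesaro} directly: for a period-$n$ orbit the Cesàro averages $\frac{1}{m}\sum_{k=0}^{m-1} f_{a,b}^k(x_0)$ taken along the subsequence $m = \ell n$ are all equal to the constant $\frac{1}{n}(x_0 + \cdots + x_{n-1})$, whose limit is therefore itself; but Theorem \ref{thm:cesaro} forces that limit to be $b$.

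The only point that needs noting — and it is part of the hypothesis — is that the orbit lies in the open interval $(0,1)$, so that $\Psi$ is defined at each $x_k$ and \eqref{iter} is applicable; the fixed endpoints $0$ and $1$ of $f_{a,b}$ are explicitly excluded. There is no genuine obstacle: the statement is an immediate consequence of the telescoping structure of \eqref{iter}, and the argument above is essentially the degenerate case of the proof of Theorem \ref{thm:cesaro} in which the partial sums of $x_k - b$ return to $0$ exactly rather than merely staying bounded.
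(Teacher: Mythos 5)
Your proposal is correct. The paper offers no separate argument for this corollary: it is stated as an immediate consequence of Theorem \ref{thm:cesaro}, which is exactly your second route (along the subsequence $m=\ell n$ the Ces\`aro averages are constantly equal to the orbit's center of mass, and the theorem forces the limit to be $b$). Your primary argument is genuinely different and, if anything, cleaner: applying $\Psi$ to the identity $x_n=x_0$ in \eqref{iter} and using that $\Psi$ is injective and $a>0$ gives $\sum_{k=0}^{n-1}(x_k-b)=0$ exactly, with no limiting process at all. What the direct route buys is an exact finite identity that does not depend on the full strength of Theorem \ref{thm:cesaro} (in particular it does not need the invariant absorbing interval from Lemma \ref{lem33}); what the paper's route buys is brevity, since the theorem has already been established. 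Your closing remarks are also accurate: the only hypothesis used is that the orbit lies in $(0,1)$, so $\Psi$ is defined along it and \eqref{iter} applies, and one should read the listed points as consecutive iterates $x_k=f_{a,b}^k(x_0)$ with $f_{a,b}^n(x_0)=x_0$, which is the intended meaning.
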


 Applying the Birkhoff Ergodic Theorem, we obtain:

\begin{corollary}\label{cmmeas}
For every probability measure $\mu$, invariant for $f_{a,b}$ and such
that $\mu(\{0,1\})=0$, we have
\[
\int_{[0,1]} x\; d\mu=b.
\]
\end{corollary}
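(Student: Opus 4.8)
The plan is to feed the pointwise statement of Theorem~\ref{thm:cesaro} into the Birkhoff Ergodic Theorem, applied to the bounded observable $g(x)=x$. First I would record that $f_{a,b}\colon[0,1]\to[0,1]$ is continuous, hence Borel measurable, as noted right after \eqref{eq:dyn2}, and that $g$ is bounded, so $g\in L^1(\mu)$ for any probability measure $\mu$ on $[0,1]$. Since $\mu$ is assumed invariant under $f_{a,b}$, the Birkhoff Ergodic Theorem produces a measurable function $\bar g$ such that $\frac1n\sum_{k=0}^{n-1} g(f_{a,b}^k(x))\to\bar g(x)$ for $\mu$-a.e.\ $x$, with $\int_{[0,1]}\bar g\,d\mu=\int_{[0,1]}g\,d\mu=\int_{[0,1]}x\,d\mu$.

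Next I would invoke Theorem~\ref{thm:cesaro} itself: for \emph{every} $x_0\in(0,1)$ the Ces\'{a}ro average $\frac1n\sum_{k=0}^{n-1}f_{a,b}^k(x_0)$ converges to $b$. The hypothesis $\mu(\{0,1\})=0$ guarantees that $\mu$-almost every point of $[0,1]$ lies in $(0,1)$, so the Birkhoff limit satisfies $\bar g=b$ $\mu$-almost everywhere; integrating then gives $\int_{[0,1]}x\,d\mu=\int_{[0,1]}\bar g\,d\mu=b$, which is the claim.

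A slightly more elementary route, bypassing Birkhoff, is to integrate the Ces\'{a}ro sums directly. Invariance of $\mu$ gives $\int_{[0,1]}f_{a,b}^k\,d\mu=\int_{[0,1]}x\,d\mu$ for every $k\ge 0$, hence $\int_{[0,1]}\left(\frac1n\sum_{k=0}^{n-1}f_{a,b}^k\right)d\mu=\int_{[0,1]}x\,d\mu$ for every $n$; the integrands are bounded by $1$ and, by Theorem~\ref{thm:cesaro}, converge pointwise to $b$ on $(0,1)$, a set of full $\mu$-measure, so the dominated convergence theorem forces $\int_{[0,1]}x\,d\mu=b$.

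There is no genuinely hard step here. The one point that needs care is the role of the hypothesis $\mu(\{0,1\})=0$: it is exactly what allows one to upgrade Theorem~\ref{thm:cesaro}, which concerns only interior initial conditions, to a $\mu$-almost-everywhere statement. It is also necessary, since $\delta_0$ and $\delta_1$ are $f_{a,b}$-invariant probability measures (because $f_{a,b}(0)=0$ and $f_{a,b}(1)=1$) whose barycenters are $0$ and $1$ rather than $b$.
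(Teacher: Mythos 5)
Your proposal is correct and takes essentially the same route as the paper, whose entire proof of Corollary~\ref{cmmeas} is an appeal to the Birkhoff Ergodic Theorem combined with the everywhere-on-$(0,1)$ convergence of Theorem~\ref{thm:cesaro} and the hypothesis $\mu(\{0,1\})=0$. Your alternative dominated-convergence argument (using invariance of $\mu$ to fix the integrals of the Ces\'aro sums) is a valid, slightly more elementary variant, but it is not substantively different.
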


In the following sections, we will show that, despite the regularity of the average trajectories which converge to the Nash equilibrium $b$, the trajectories themselves typically exhibit complex and diverse behaviors.

\section{Two definitions of chaos}
In this section we introduce two notions of chaotic behavior: Li-Yorke chaos and (positive) topological entropy.
Most definitions of chaos focus on complex behavior of trajectories, such as Li-Yorke chaos or fast growth of the number of distinguishable orbits of length $n$, detected by positivity of the topological entropy.

\begin{definition}[Li-Yorke chaos]  \label{LYchaos-def}
Let $(X,f)$ be a dynamical system and $(x,y)\in X\times X$.
We say that $(x,y)$ is a \emph{Li-Yorke pair} if
\[
\liminf_{n\to\infty} dist (f^n(x),f^n(y))=0,\;\;\text{and}\;\;
\limsup_{n\to\infty} dist (f^n(x),f^n(y))>0.
\]
A dynamical system $(X,f)$ is \emph{Li-Yorke chaotic} if there is an uncountable set $S\subset X$ (called \emph{scrambled set}) such that every pair $(x,y)$ with $x,y\in S$ and $x\neq y$ is a Li-Yorke pair.
\end{definition}

Intuitively orbits of two points from the scrambled set have to gather themselves arbitrarily close and  spring aside infinitely many times but (if $X$ is compact) it cannot happen simultaneously for each pair of points. 
Obviously the existence of a large scrambled set implies that orbits of points behave in unpredictable, complex way.

A crucial feature of the chaotic behavior of a dynamical system is also exponential growth of the number of distinguishable orbits. This happens if and only if the topological entropy of the system is positive. In fact positivity of topological entropy turned out to be an essential criterion of chaos \cite{GW93}.
This choice comes from the fact that the future of a deterministic (zero entropy) dynamical system can be predicted if its past is known (see \cite[Chapter 7]{Weiss}) and  positive entropy is related to randomness and chaos.
For every dynamical system over a compact phase space, we can define a number $h(f)\in[0,\infty]$ called the \emph{topological entropy} of transformation $f$. This quantity was first introduced by Adler, Konheim and McAndrew \cite{AKM} as the topological counterpart of a metric (and Shannon) entropy. In general, computing topological entropy is not an easy task. However, in the context of piecewise monotone interval maps, topological entropy is equal to the exponential growth rate of the minimal number of monotone subintervals for $f^n$.

\begin{theorem}[\cite{MS}]
Let $f$ be a piecewise monotone interval map and, for all $n\geq 1$, let $m_n$ be the minimal cardinality of a monotone partition for $f^n$. Then \[h(f)=\lim_{n\to \infty}\frac 1n \log m_n =\inf_{n\geq 1} \frac 1n \log m_n.\]
\end{theorem}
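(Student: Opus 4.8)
The plan is to identify $m_n$ with the \emph{lap number} $\ell(f^n)$ of $f^n$ --- the number of its maximal intervals of monotonicity --- since a minimal monotone partition of $[0,1]$ for a piecewise monotone map consists precisely of these maximal intervals. The combinatorial input for the ``$\lim=\inf$'' part is submultiplicativity: on each of the $\ell(f^n)$ laps $L$ of $f^n$ the map $f^n$ is a homeomorphism onto the subinterval $f^n(L)$, on which $f^m$ has at most $\ell(f^m)$ laps, so $\ell(f^{n+m})\le\ell(f^n)\,\ell(f^m)$ and $(\log m_n)_{n\ge1}$ is subadditive. Fekete's lemma then gives that $s:=\lic\frac1n\log m_n=\inf_{n\ge1}\frac1n\log m_n$ exists and is finite (indeed $s\le\log m_1$), which is the second equality; it remains to prove $h(f)=s$.

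For $h(f)\le s$ I would work with $(n,\varepsilon)$-separated sets. Fix $\varepsilon>0$ and a partition $\mathcal B$ of $[0,1]$ into finitely many intervals of length less than $\varepsilon$; any two points in a common element of $\bigvee_{i=0}^{n-1}f^{-i}\mathcal B$ stay within $\varepsilon$ of one another through the first $n$ iterates, so the maximal size $r(n,\varepsilon)$ of an $(n,\varepsilon)$-separated set is at most the number of elements of that join. Since $f^i$ has $m_i$ monotone branches, $(f^i)^{-1}(y)$ has at most $m_i$ points for every $y$, so the break points of $\bigvee_{i=0}^{n-1}f^{-i}\mathcal B$ --- which all lie in $\bigcup_{i=0}^{n-1}(f^i)^{-1}(\partial\mathcal B)$ --- number at most $\varepsilon^{-1}\sum_{i=0}^{n-1}m_i$, whence $r(n,\varepsilon)\le 1+\varepsilon^{-1}\sum_{i=0}^{n-1}m_i$. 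As $m_i\le e^{(s+\eta)i}$ for all large $i$ (from $\frac1i\log m_i\to s$), one has $\limsup_n\frac1n\log\sum_{i=0}^{n-1}m_i=s$; applying $\frac1n\log(\cdot)$, then $\limsup_n$, then $\varepsilon\to0$ to the bound on $r(n,\varepsilon)$ gives $h(f)\le s$.

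The lower bound $h(f)\ge s$ is the main obstacle. Simply choosing one point per lap of $f^n$ need not yield an $(n,\gamma)$-separated set with a fixed $\gamma>0$, since the extremal values of $f^1,\dots,f^{n-1}$ may accumulate so that adjacent laps are never separated at an early time. The remedy, due to Misiurewicz and Szlenk, is to trade monotone branches for genuine \emph{horseshoes}: exploiting that a turning point of $f^i$ is sent by $f^{i-1}$ to one of the finitely many turning points of $f$, one shows that for arbitrarily large $m$ the map $f^m$ admits a horseshoe --- a family of pairwise disjoint closed intervals each mapped by $f^m$ onto their union --- with at least $e^{(s-\eta)m}$ branches. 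Any such horseshoe carries topological entropy at least $(s-\eta)m$, so $h(f^m)\ge(s-\eta)m$, and by the power rule $h(f^m)=m\,h(f)$ we get $h(f)\ge s-\eta$; letting $\eta\to0$ yields $h(f)\ge s$. Together with the upper bound, this proves $h(f)=s=\inf_{n\ge1}\frac1n\log m_n$.
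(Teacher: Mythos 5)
Your reduction to lap numbers, the submultiplicativity $\ell(f^{n+m})\le\ell(f^n)\ell(f^m)$ with Fekete's lemma (giving $\lim=\inf=:s$), and the upper bound $h(f)\le s$ via counting elements of $\bigvee_{i=0}^{n-1}f^{-i}\mathcal B$ through preimages of the $O(\varepsilon^{-1})$ endpoints of $\mathcal B$ are all correct (up to harmless constants, and modulo reading ``monotone'' as strictly monotone so that fibers of $f^i$ have at most $m_i$ points). Note that the paper itself offers no proof to compare with: the statement is imported from the cited reference [MS], so the only question is whether your argument stands on its own.

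It does not, because the lower bound $h(f)\ge s$ --- which you yourself identify as the main obstacle, and which is the entire nontrivial content of the theorem --- is not proved but asserted. The claim that exponential lap growth at rate $s$ forces, for arbitrarily large $m$, a horseshoe for $f^m$ with at least $e^{(s-\eta)m}$ branches is exactly the deep step: the sentence ``exploiting that a turning point of $f^i$ is sent by some iterate to a turning point of $f$, one shows\ldots'' contains no construction, and none is routine, since many laps by themselves do not yield a horseshoe (laps with tiny images produce no separation, which is the very difficulty you flagged). Moreover, the result you implicitly lean on, Misiurewicz's horseshoe theorem, produces horseshoes from \emph{positive topological entropy}, not from lap growth; invoking it here presupposes $h(f)\ge s-\eta$, i.e.\ the inequality being proved, so the argument as written is circular unless you supply a direct derivation of large horseshoes (or some other mechanism, e.g.\ the original Misiurewicz--Szlenk covering/separation argument or a Milnor--Thurston/Markov-extension approach) from the hypothesis $\frac1n\log\ell(f^n)\to s$ alone. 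Everything downstream of that claim (a $k$-branch horseshoe for $f^m$ gives $h(f^m)\ge\log k$, and $h(f^m)=m\,h(f)$) is fine, but the missing step is the theorem's core.
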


\section{Asymptotic stability of Nash equilibria}\label{sec:Nashchaos}
\subsection{Asymptotic stability of Nash equilibria}
 \label{ssec:Nash}

The dynamics induced by \eqref{eq:dyn2} admits three fixed points: $0$, $1$ and $b$. By Lemma \ref{lem33}.\ref{generalinterval} we know that all orbits starting from $(0,1)$ eventually fall into a globally attracting interval $I$. Thus, the points $0$ and $1$ are repelling.  When does the Nash equilibrium $b$ attract all point from $(0,1)$? First, we look %for the criterion discerning 
when $b$ is an attracting and when it is a repelling fixed point. With this aim, we study the derivative of $f_{a,b}$:
\[
 f'_{a,b} (x) = \left(\Psi^{-1} \right)'\left( \Psi(x) + a(x-b) \right) \cdot \left( \Psi'(x) + a \right).
\]
Then, 
\begin{equation}\label{fa'}
 f'_{a,b} (b) = \left(\Psi^{-1} \right)'(\Psi(b)) \cdot \left( \Psi'(b) + a \right)=\frac{\Psi'(b) + a}{\Psi'(b)}.
\end{equation}
%Moreover, $\displaystyle \left(\Psi^{-1}\right)'(y) = \frac{1}{\Psi'(x)}$ for $y = \Psi(x)$. Therefore, from \eqref{fa'}
%$$f'_{a,b }(b) = \frac{\Psi'(b) + a}{\Psi'(b)}.$$

The fixed point $b$ is attracting if and only if $\left| f'_{a,b} (b) \right| < 1$, which is equivalent to the condition:
\begin{equation}\label{fa'<1}
\left| \Psi'(b) + a \right| < -\Psi'(b).
\end{equation}
Thus, the fixed point $b$ is  attracting if and only if $ a \in \left( 0, -2 \cdot \Psi'(b) \right)$ and repelling otherwise. We will  answer when $b$ is globally attracting on $(0,1)$.
First we will show the following auxiliary lemma.
%To show this theorem we will use the following lemma:
\begin{lemma}\label{conj}
Let a function $g \colon I \mapsto \mathbb{R}$ be such that %$g \in \mathcal{C}^3(I)$ and 
$g'''<0$. Then 
\[
 g'\left( \frac{x+y}{2} \right) > \frac{g(x)-g(y)}{x-y}
\]
for every $x,y \in I$.
\end{lemma}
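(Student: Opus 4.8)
The plan is to reduce the claim to a one-variable statement about a midpoint and a half-width, and then read the inequality off a third-order Taylor expansion. Since both sides of the asserted inequality are unchanged when $x$ and $y$ are interchanged, I may assume without loss of generality that $x>y$ (the case $x=y$ being excluded, as the right-hand side is then undefined). Write $m=\frac{x+y}{2}$ for the midpoint and $h=\frac{x-y}{2}>0$ for the half-width, so that $x=m+h$ and $y=m-h$, and the target inequality becomes $2h\,g'(m) > g(m+h)-g(m-h)$.

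Next I would expand $g$ about $m$ to third order with the Lagrange form of the remainder. The hypothesis that $g'''$ exists on all of $I$ makes $g''$ continuous there, which is exactly the regularity needed: there are points $\xi_1\in(m,m+h)$ and $\xi_2\in(m-h,m)$ with
\[
g(m+h) = g(m) + h\,g'(m) + \tfrac{h^2}{2}\,g''(m) + \tfrac{h^3}{6}\,g'''(\xi_1),
\]
\[
g(m-h) = g(m) - h\,g'(m) + \tfrac{h^2}{2}\,g''(m) - \tfrac{h^3}{6}\,g'''(\xi_2).
\]
Subtracting, the zeroth- and second-order terms cancel and one is left with
\[
g(m+h) - g(m-h) = 2h\,g'(m) + \tfrac{h^3}{6}\big(g'''(\xi_1) + g'''(\xi_2)\big).
\]

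Finally, since $g'''<0$ on $I$, the bracket $g'''(\xi_1)+g'''(\xi_2)$ is strictly negative, and $h^3>0$, so the remainder term is strictly negative. Hence $g(m+h)-g(m-h) < 2h\,g'(m) = (x-y)\,g'\!\left(\tfrac{x+y}{2}\right)$, and dividing through by $x-y>0$ yields the claim. I do not expect a genuine obstacle here; the only point requiring a moment's care is justifying the Lagrange remainder, which is immediate from the stated differentiability. As an alternative one could set $\phi(h)=g(m+h)-g(m-h)-2h\,g'(m)$, observe $\phi(0)=\phi'(0)=\phi''(0)=0$ while $\phi'''(h)=g'''(m+h)+g'''(m-h)<0$, and integrate successively from $0$ to force $\phi(h)<0$ for $h>0$; this avoids Taylor's theorem but amounts to the same computation.
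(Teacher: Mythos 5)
Your proof is correct, and it takes a genuinely different route from the paper's. You reduce to the midpoint/half-width variables and invoke Taylor's theorem of order three with Lagrange remainder about the midpoint $m$: the even-order terms cancel in $g(m+h)-g(m-h)$, and the sign of the remainder $\tfrac{h^3}{6}\bigl(g'''(\xi_1)+g'''(\xi_2)\bigr)<0$ gives the strict inequality; your regularity remark is right, since the existence of $g'''$ on $I$ makes $g''$ continuous there, which is exactly what the Lagrange form requires. The paper instead writes $\tfrac{y-x}{2}g'\bigl(\tfrac{x+y}{2}\bigr)-\int_x^{(x+y)/2}g'(t)\,dt$ and $\int_{(x+y)/2}^{y}g'(t)\,dt-\tfrac{y-x}{2}g'\bigl(\tfrac{x+y}{2}\bigr)$ as double integrals of $g''$ over two congruent triangular regions on either side of the midpoint and compares them using only that $g''$ is strictly decreasing (a consequence of $g'''<0$), then concludes via $\int_x^y g'(t)\,dt = g(y)-g(x)$. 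So the paper's argument in effect needs only strict monotonicity of $g''$, while yours uses $g'''$ directly but is shorter once Taylor's theorem is cited; both exploit the same symmetry about the midpoint. One small caveat on your alternative argument with $\phi(h)=g(m+h)-g(m-h)-2h\,g'(m)$: since $g'''$ is not assumed continuous (or even integrable), you should not literally ``integrate'' $\phi'''$; instead argue that $\phi'''<0$ makes $\phi''$ strictly decreasing, so $\phi''(s)<\phi''(0)=0$ for $s>0$, hence $\phi'$ is strictly decreasing with $\phi'(0)=0$, hence $\phi'<0$, hence $\phi(h)<\phi(0)=0$. With that phrasing the alternative is also fully rigorous.
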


\begin{proof} Without loss of generality we can assume that $x<y$. Then
\begin{align*}
 &\frac{y-x}{2} g'\left( \frac{x+y}{2} \right) - \int_{x}^{\frac{x+y}{2}} g'(t) \dt = \int_{x}^{\frac{x+y}{2}} \int_{t}^{\frac{x+y}{2}} g''(s) \ds\dt \\
 &> \int_{\frac{x+y}{2}}^{y} \int_{\frac{x+y}{2}}^{t} g''(s) \ds\dt = \int_{\frac{x+y}{2}}^{y} g'(t) \dt - \frac{y-x}{2} g'\left( \frac{x+y}{2} \right),
\end{align*}
where the inequality follows from the fact that $g''(s)$ is smaller in the latter region while the integration is over the set of the same size. Therefore,
\[
 g'\left( \frac{x+y}{2} \right) > \frac{1}{y-x} \int_{x}^{y} g'(t) \dt = \frac{g(y)-g(x)}{y-x},
\]
which completes the proof of lemma.
\end{proof}
The following theorem  answers, whether $b$ is globally attracting on $(0,1)$.

\begin{theorem} \label{trajconv}
Let $\Psi$ be a homeomorphism derived from a regularizer from $\Reg$. Suppose that $b$ is an attracting fixed point of $f_{a,b}$. %Assume that $\Psi \in \mathcal{C}^3([0,1])$ 
If $\Psi'''<0$, then trajectories of all points from $(0,1)$ converge to $b$.
\end{theorem}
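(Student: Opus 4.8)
The plan is to show that $f_{a,b}$ is a contraction towards $b$ in a suitable sense, using the hypothesis $\Psi''' < 0$ to control the derivative $f'_{a,b}$ on all of $(0,1)$, not just at the fixed point. First I would recall from Lemma \ref{lem33}.\ref{generalinterval} that every trajectory from $(0,1)$ eventually enters a closed invariant interval $I \subset (\delta,1-\delta)$, so it suffices to prove convergence to $b$ for points of $I$; and since $b$ is attracting, it is enough to show that no orbit in $I$ can avoid every neighborhood of $b$, i.e.\ that $|f_{a,b}(x)-b|<|x-b|$ for all $x\in I\setminus\{b\}$, or at least that $\limsup$ of the distance to $b$ along any orbit is zero.

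The key step is a uniform bound on $f'_{a,b}$. Writing $f'_{a,b}(x) = \dfrac{\Psi'(x)+a}{\Psi'(\xi)}$ where, by the mean value theorem applied to $\Psi^{-1}$, $\Psi(\xi)$ lies between $\Psi(x)$ and $\Psi(x)+a(x-b)$ — equivalently $\xi$ lies between $x$ and $f_{a,b}(x)$, hence between $x$ and $b$ by Lemma \ref{lem33}(i) — I want to show $|f'_{a,b}(x)| < 1$, i.e.\ $|\Psi'(x)+a| < -\Psi'(\xi)$. Here is where Lemma \ref{conj} enters: apply it to $g = \Psi$ (valid since $\Psi''' < 0$) on the interval with endpoints $x$ and $b$. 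Since $\xi$ is strictly between $x$ and $b$ (and in fact I expect the relevant comparison point to be $\tfrac{x+b}{2}$ after a further monotonicity argument, because $g'$ is decreasing so $g'(\xi) < g'(\tfrac{x+b}{2})$ when $\xi > \tfrac{x+b}{2}$, etc.), Lemma \ref{conj} gives $\Psi'\!\left(\tfrac{x+b}{2}\right) > \dfrac{\Psi(x)-\Psi(b)}{x-b}$. Combining this with the attracting condition $|\Psi'(b)+a| < -\Psi'(b)$ (equation \eqref{fa'<1}) and the monotonicity of $\Psi'$ (which is negative and, by $\Psi'''<0$, concave), I expect to close the inequality $|\Psi'(x)+a| < -\Psi'(\xi)$ for every $x\in(0,1)$, $x\neq b$. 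This yields $|f_{a,b}(x)-b| < |x-b|$ for all such $x$.

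From the strict inequality $|f_{a,b}(x)-b|<|x-b|$ on $(0,1)\setminus\{b\}$, standard reasoning finishes: for $x_0\in I$ the sequence $|f_{a,b}^n(x_0)-b|$ is strictly decreasing and bounded below, hence converges to some $c\geq 0$; the $\omega$-limit set is then contained in $\{b-c, b+c\}\cap[0,1]$, and since that two-point set is not invariant unless $c=0$ (again because $f_{a,b}$ strictly decreases the distance to $b$), we get $c=0$, so $f_{a,b}^n(x_0)\to b$. Globality then follows because arbitrary orbits from $(0,1)$ enter $I$.

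The main obstacle I anticipate is the case analysis in the derivative bound: the comparison point $\xi$ produced by the mean value theorem is not exactly the midpoint $\tfrac{x+b}{2}$, so one must carefully combine the direction of the inequality $f_{a,b}(x)\lessgtr x$ (Lemma \ref{lem33}(i)), the sign of $x-b$, the decreasing-and-concave behavior of $\Psi'$ (from $\Psi''<0$ and $\Psi'''<0$), and Lemma \ref{conj} to guarantee $-\Psi'(\xi)$ is large enough. Getting all the sign bookkeeping right so that the single inequality $|\Psi'(x)+a| < -\Psi'(\xi)$ holds uniformly — rather than only near $b$ where \eqref{fa'<1} is literally the statement — is the delicate part; the symmetry $\Psi(1-x)=-\Psi(x)$ from Proposition \ref{psiproperties}(i) together with Lemma \ref{lem33}(ii) may be used to reduce, say, to the case $x<b$ or to $b\le \tfrac12$, simplifying the bookkeeping.
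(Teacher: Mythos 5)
There is a genuine gap in your key step. The uniform bound $\lvert f'_{a,b}(x)\rvert<1$ for all $x\in(0,1)$ that you aim for is simply false under the hypotheses. For instance, with the Shannon regularizer, $b=0.1$ and $a=22$ (attracting, since $-2\Psi'(b)=2/(b(1-b))\approx 22.2$), at the point $x^*$ near $1$ where $\Psi(x^*)+a(x^*-b)=0$ one has $f_{a,b}(x^*)=\tfrac12$, so $f'_{a,b}(x^*)=\bigl(\Psi'(x^*)+a\bigr)/\Psi'(\tfrac12)$ with $\Psi'(x^*)$ enormous in absolute value; hence $\lvert f'_{a,b}(x^*)\rvert\gg 1$. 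The source of the trouble is your parenthetical claim that the mean-value point $\xi$ lies between $x$ and $b$: Lemma \ref{lem33}(i) only says $f_{a,b}(x)>x$ for $x<b$; the image may overshoot $b$, so $\xi$ (which lies between $x$ and $f_{a,b}(x)$) need not be between $x$ and $b$, and there is no lower bound on $-\Psi'(\xi)$, while $\lvert\Psi'(x)+a\rvert$ blows up near the endpoints. Restricting to the invariant interval $I$ of Lemma \ref{lem33}(iii) does not rescue this, since that lemma gives no quantitative control of $I$.

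The good news is that your intermediate target $\lvert f_{a,b}(x)-b\rvert<\lvert x-b\rvert$ for all $x\neq b$ is true and follows from Lemma \ref{conj} with no mean value theorem at all: apply it to the pair $x$ and its reflection $y=2b-x$ (when $y\in(0,1)$; otherwise the inequality is automatic), whose midpoint is exactly $b$. This gives $\Psi'(b)>\frac{\Psi(x)-\Psi(2b-x)}{2(x-b)}$, and combined with the attractivity condition $\Psi'(b)<-\tfrac{a}{2}$ it yields $\Psi(x)+a(x-b)>\Psi(2b-x)$ for $x<b$ and the reverse inequality for $x>b$; since $\Psi^{-1}$ is decreasing, this means $x<f_{a,b}(x)<2b-x$ for $x<b$ and $2b-x<f_{a,b}(x)<x$ for $x>b$, i.e.\ the distance to $b$ strictly decreases, after which your monotone-distance/$\omega$-limit argument closes the proof. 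This repaired route is genuinely different from the paper's: the paper shows that any period-2 orbit must be symmetric about $b$, derives $a=-2\,\frac{\Psi(x_1)-\Psi(x_0)}{x_1-x_0}$, contradicts attractivity via Lemma \ref{conj}, and then cites Block--Coppel to conclude that a map without period-2 points has all trajectories converging to fixed points; your approach proves the stronger pointwise contraction toward $b$ and thereby avoids the external citation, but as written the derivative-bound step on which you hang it does not hold.
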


\begin{proof}
In order to prove this theorem it is sufficient to show that $f_{a,b}$ doesn't have periodic orbits of period 2. %We prove this fact by {\it reductio ad absurdum}.

Suppose that $\{x_0, x_1\} \in (0,1)$ is a periodic orbit of $f_{a,b}$ of period 2. %Without loss of generality we can assume that $x_0 < x_1$. Then
%\begin{equation}\label{cesaro}
\[ \frac{x_0+x_1}{2} = b.\]
%\end{equation}
We have that $x_1 = \Psi^{-1}\left( \Psi(x_0) + a(x_0 - b) \right)$, and therefore,
%\begin{equation}\label{fromfab}
\[ \Psi(x_1) = \Psi(x_0) + a(x_0 - b).\]
%\end{equation}
%By \eqref{cesaro} we know that $x_0 - b = \frac{x_0 - x_1}{2}$. 
Thus, % by \eqref{fromfab} we have that
$
 \Psi(x_1) - \Psi(x_0) = -\frac{a}{2}(x_1 - x_0),
$
or equivalently
\begin{equation}\label{diffquo}
 a = -2 \cdot \frac{\Psi(x_1) - \Psi(x_0)}{x_1 - x_0}.
\end{equation}
By Lemma \ref{conj} \begin{equation} \label{cont1}\Psi'(b) > \frac{\Psi(x_1) - \Psi(x_0)}{x_1 - x_0}=-\frac a2,\end{equation} 
but the point $b$ is attracting if and only if $\Psi'(b)<-\frac a2$, which contradicts the inequality \eqref{cont1}. 
Therefore, $f$ has no periodic point of period 2.

Now, by \cite{blockcop}, Chapter VI, Proposition 1, every
trajectory of $f$ converges to a fixed point.

%Thus we have from \eqref{diffquo} that
%\begin{equation}\label{cont} \Psi'(b)< \frac{\Psi(x_1) - \Psi(x_0)}{x_1 - x_0}.
%\end{equation}
%However from Lemma \ref{conj} and by using \eqref{cesaro} we obtain that $\Psi'(b) > \frac{\Psi(x_1) - \Psi(x_0)}{x_1 - x_0}$. The last inequality contradicts \eqref{cont}.
\end{proof}

\begin{corollary}\label{cor:attracting}
Let $\Psi'''<0$. Then  the Nash equilibrium $b$ attracts all points from the open interval $(0,1)$ if and only if $a\in (0,-2\Psi'(b))$.
\end{corollary}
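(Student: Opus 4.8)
The plan is to prove Corollary~\ref{cor:attracting} by combining Theorem~\ref{trajconv} with the local stability criterion already established in \eqref{fa'<1}, and then arguing that global attraction \emph{fails} outside the stated parameter range. Concretely, I would split the statement into the two implications of the ``if and only if.''

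For the forward direction, assume $a \in (0, -2\Psi'(b))$. By the computation leading to \eqref{fa'<1}, this is exactly the condition $|f_{a,b}'(b)| < 1$, so $b$ is an attracting fixed point of $f_{a,b}$. Since we are given $\Psi''' < 0$, Theorem~\ref{trajconv} applies verbatim and yields that the trajectory of every point of $(0,1)$ converges to $b$. That is the whole content of this direction; it is essentially a citation of the preceding theorem once the local condition is recognized.

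For the reverse direction, suppose $a \notin (0, -2\Psi'(b))$; since $a > 0$ and $\Psi'(b) < 0$, this means $a \ge -2\Psi'(b)$. The claim is that then $b$ does \emph{not} attract all of $(0,1)$. If $a > -2\Psi'(b)$, then $|f_{a,b}'(b)| > 1$ by \eqref{fa'<1}, so $b$ is a repelling fixed point and points near $b$ (other than $b$ itself) move away, so $b$ is certainly not globally attracting. The borderline case $a = -2\Psi'(b)$, where $f_{a,b}'(b) = -1$, is the delicate one: local linearization is inconclusive, and one must show there is still a point whose orbit does not converge to $b$. Here I would use the symmetry $\Psi(1-x) = -\Psi(x)$ from Proposition~\ref{psiproperties}.\ref{3i} together with $\Psi''' < 0$ to locate an actual period-2 orbit. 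Reversing the argument in the proof of Theorem~\ref{trajconv}: a pair $\{x_0, x_1\}$ symmetric about $b$ (i.e. $x_0 + x_1 = 2b$) is a period-2 orbit precisely when $a = -2(\Psi(x_1)-\Psi(x_0))/(x_1-x_0)$; as the symmetric pair collapses to $b$ this difference quotient tends to $\Psi'(b)$, so the required $a$-value tends to $-2\Psi'(b)$, and strict concavity of $\Psi'$ (from $\Psi''' < 0$, via Lemma~\ref{conj} applied to $g = \Psi$) forces the difference quotient to stay strictly below $\Psi'(b)$ for genuinely separated symmetric pairs, i.e. the required $a$ is strictly larger than $-2\Psi'(b)$ there. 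Monotonicity/continuity of $a \mapsto -2(\Psi(b+s)-\Psi(b-s))/(2s)$ in the half-width $s$ then gives, for each $a \ge -2\Psi'(b)$, a symmetric period-2 orbit (degenerate, equal to $\{b\}$, exactly when $a = -2\Psi'(b)$ — so one must handle whether the inequality is strict).

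I expect the main obstacle to be exactly this boundary case $a = -2\Psi'(b)$: ruling it in or out requires more than the linearization and more than Theorem~\ref{trajconv}, and the cleanest route is to decide whether the family of symmetric difference quotients $s \mapsto -\tfrac{1}{s}\bigl(\Psi(b+s)-\Psi(b-s)\bigr)$ is strictly increasing in $s$ on the relevant interval (which would put the first nondegenerate symmetric period-2 orbit at $a$ strictly above $-2\Psi'(b)$, making $b$ globally attracting also at $a = -2\Psi'(b)$ and forcing the interval in the statement to be half-open — so one should double-check the exact interval claimed and, if it is genuinely open as written, supply the extra point at $a=-2\Psi'(b)$ whose orbit does not converge, perhaps a boundary/heteroclinic point rather than a periodic one). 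Everything else — the forward direction and the strictly-repelling subcase $a > -2\Psi'(b)$ — is immediate from \eqref{fa'<1} and Theorem~\ref{trajconv}.
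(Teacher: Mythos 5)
Your plan is essentially the paper's own (implicit) argument: the paper gives no separate proof of this corollary, reading it off from the computation around \eqref{fa'<1} (local attraction iff $a\in(0,-2\Psi'(b))$, with the blanket remark ``repelling otherwise'') combined with Theorem \ref{trajconv}. Your forward direction and your treatment of the strictly repelling case $a>-2\Psi'(b)$ are exactly that. One small completion is needed in the repelling case: ``nearby points move away'' does not by itself exclude orbits that later land exactly on $b$; but an orbit converging to a fixed point with $|f_{a,b}'(b)|>1$ must eventually hit $b$, and since $\Psi'''<0$ makes $\Psi'$ strictly concave, $f_{a,b}$ has at most two critical points and is finite-to-one, so only countably many points eventually map onto $b$ and global attraction indeed fails there.

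Your hesitation about the boundary $a=-2\Psi'(b)$ is justified, and it resolves in the direction you guessed, not in the direction the literal statement requires. At that value $f_{a,b}'(b)=-1$, which is neutral rather than repelling (despite the paper's wording), and your own difference-quotient observation settles the matter: any period-$2$ orbit $\{x_0,x_1\}$ has $(x_0+x_1)/2=b$ by the center-of-mass corollary following Theorem \ref{thm:cesaro}, hence satisfies $a=-2\bigl(\Psi(x_1)-\Psi(x_0)\bigr)/(x_1-x_0)$, and Lemma \ref{conj} applied to $\Psi$ gives the \emph{strict} inequality $\Psi'(b)>-a/2$, i.e.\ $a>-2\Psi'(b)$. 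So at $a=-2\Psi'(b)$ there is no period-$2$ orbit, the Block--Coppel step in the proof of Theorem \ref{trajconv} applies verbatim, every orbit converges to a fixed point, and since orbits from $(0,1)$ enter the invariant interval $I\subset(0,1)$ of Lemma \ref{lem33} they converge to $b$. Consequently there is no ``extra non-converging point'' to supply at the boundary: $b$ still attracts all of $(0,1)$ when $a=-2\Psi'(b)$, the correct parameter set for global attraction is the half-open interval $(0,-2\Psi'(b)]$, and the corollary as printed is imprecise at that single value (or must be read as asserting that $b$ is a locally attracting fixed point which attracts all of $(0,1)$). Apart from making this boundary discussion explicit, your proposal coincides with the paper's route.
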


%{\color{blue}%The assumption about the third derivative of $\Psi$ is not too restrictive. In fact, f
Functions $\Psi$ derived from Shannon entropy, HCT entropy or log-barrier satisfy the inequality $\Psi'''<0$. Nevertheless, this additional condition is needed, because for an arbitrary $\Psi$ derived from $\Reg$ attracting orbits of any period may exist together with the attracting Nash equilibrium $b$.  In the next section we will discuss thoroughly an example of such behavior. This shows that even for the well-known class of FoReL algorithms knowledge of local behavior (even attraction) of the Nash equilibrium may not be enough to properly describe behavior of agents.%}

\begin{figure}
\centering
\begin{subfigure}[b]{0.9\textwidth}
%\captionsetup[subfigure]{labelformat=empty}
\includegraphics[width=\textwidth,height=8cm]{./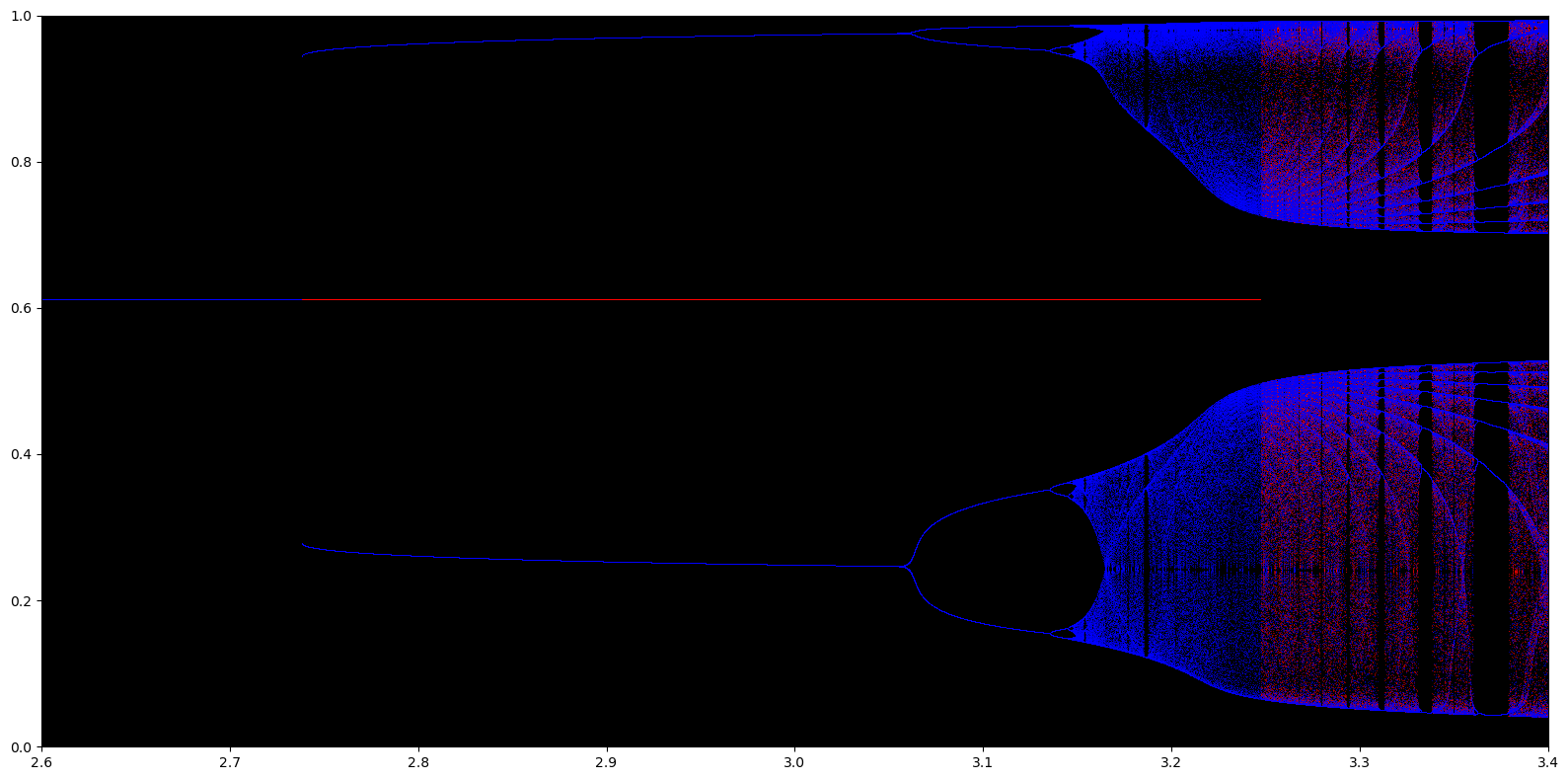}
%\caption{}
\end{subfigure}\qquad
\begin{subfigure}[b]{.9\textwidth}
\includegraphics[width=\textwidth,height=8cm]{./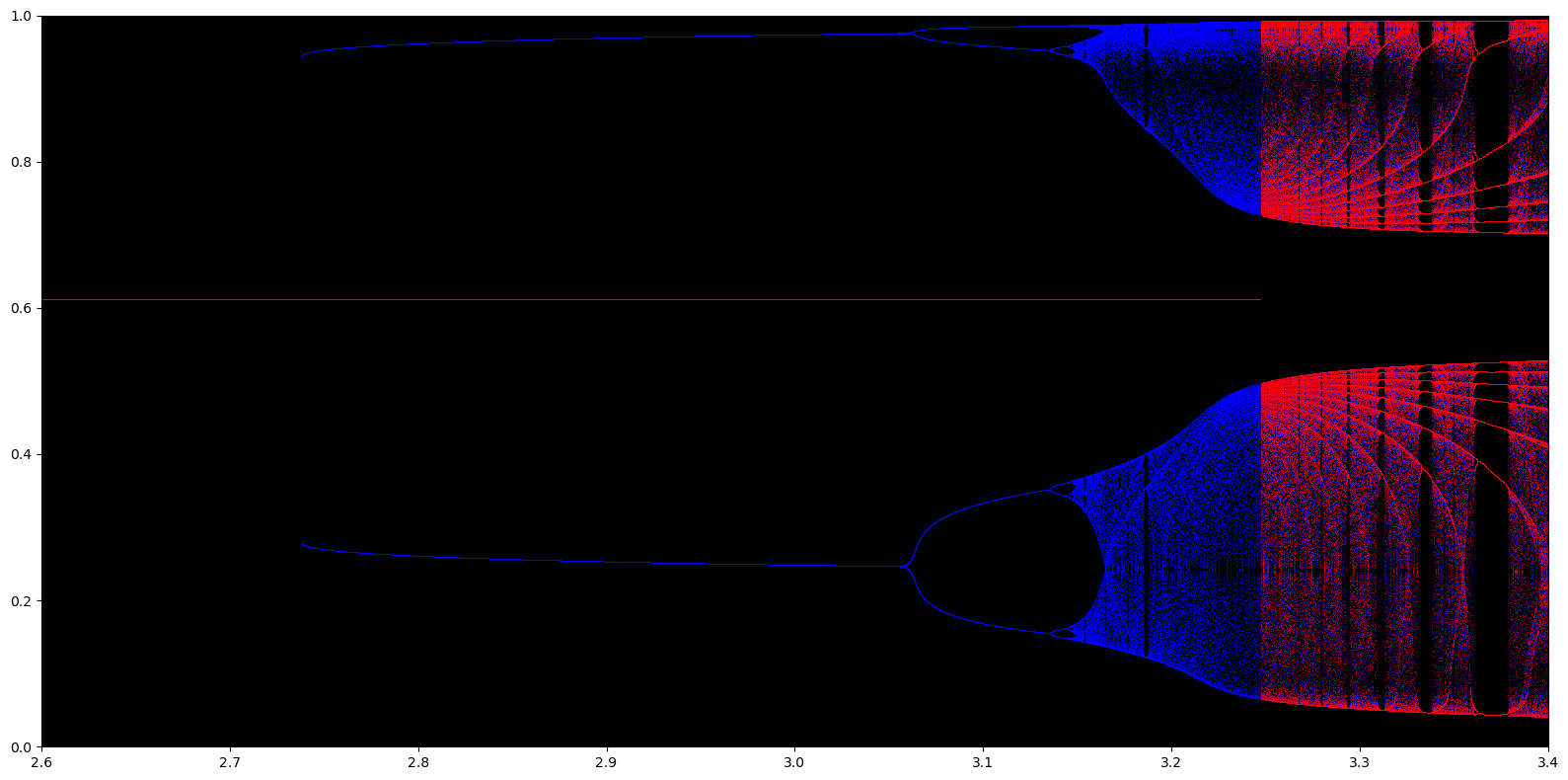}
%\caption{}
\end{subfigure}
\caption{{\bf Coexistence of the attracting Nash equilibrium and chaos.} The bifurcation diagrams for $f_{a,b}$ where the dynamics is induced by the regularizer $r(x)=(1-x)\log (1-x)+x\log x-0.4167 \log (-x^2+x+0.11)$ for $b=0.61$. On the horizontal axis  the parameter $a$ is between $2.6$ and $3.4$, and on the vertical axis values of $f_{a,b}$ are shown.   As starting points for bifurcation diagrams two critical points of $f_{a,b}$ are taken ---  red refers to the critical point in $(0,0.5)$ and blue the critical point in $(0.5,1)$. Each critical point is iterated 4000 times, visualizing the last 200 iterates. On the top picture first red and then blue trajectories are drawn, and on the bottom one the order is reversed. Function $\Psi(x)=-r'(x)=\log (1-x)-\log x+0.4167 [\frac{1}{1.1-x}-\frac{1}{x+0.1}]$  fulfills all assumptions of Theorem \ref{trajconv} excluding $\Psi'''<0$. Although for $a<-2\Psi'(b)\approx 3.28$ the unique Nash equilibrium is attracting we can observe chaotic behavior already for $a>3.15$. The picture suggests that in the coexistence region we have an interval 
which is invariant for $f_{a,b}^2$, and in it we see the usual
evolution of unimodal maps. This means that sometimes we see an 
attracting periodic orbit, sometimes a chaotic attractor.}
\label{fig:minipathology}
\end{figure}

\subsection{Coexistence of attracting Nash equilibrium and chaos}
\label{sec:coexistsence}

In this section we will describe an example of the regularizer from
$\Reg$, which introduces game dynamics in which attracting Nash
equilibrium coexist with chaos, see example in
Figure \ref{fig:minipathology}. This phenomenon is observed
by replacing the Shannon entropic regularizer by the log-barrier
regularizer. Namely, we take
\[
\Psi(x)=\log(1-x)-\log x+0.4167\cdot \left(\frac{1}{1.1-x}
-\frac{1}{x+0.1}\right).
\]

We will show that there exist $a>0$, $b\in (0,1)$ such that $f_{a,b}$
has an attracting fixed point (which is the Nash equilibrium) yet the map can be
chaotic!

\begin{proposition}
There exist $a>0$, $b\in (0,1)$ such that $f_{a,b}$ has an attracting
fixed point (Nash equilibirum), positive topological entropy and is
Li-Yorke chaotic.
\end{proposition}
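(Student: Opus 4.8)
The plan is to exhibit explicit parameters. I take $b=0.61$, exactly as in Figure~\ref{fig:minipathology}; then with the given $\Psi(x)=\log(1-x)-\log x+0.4167\bigl(\tfrac{1}{1.1-x}-\tfrac{1}{x+0.1}\bigr)$ one computes $\Psi'(0.61)\approx-1.64$, so the period-doubling threshold of the fixed point $b$ is $a^{*}:=-2\Psi'(b)\approx 3.28$. From \eqref{fa'}, $f'_{a,b}(b)=1+a/\Psi'(b)$, so the stability criterion \eqref{fa'<1} holds for every $a\in(0,a^{*})$: for all such $a$ the Nash equilibrium $b$ is a locally attracting fixed point of $f_{a,b}$, while by Lemma~\ref{lem33} the endpoints $0,1$ remain repelling. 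This disposes of the attracting-fixed-point clause, and the proposition reduces to producing a single $a<a^{*}$ at which $f_{a,b}$ is also chaotic.

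For the chaos I would follow the renormalization picture indicated in the caption of Figure~\ref{fig:minipathology}. Since $f'_{a,b}(x)=(\Psi^{-1})'\!\bigl(\Psi(x)+a(x-b)\bigr)\bigl(\Psi'(x)+a\bigr)$ and $(\Psi^{-1})'<0$, the turning points of $f_{a,b}$ solve $\Psi'(x)=-a$; for $a$ near $a^{*}$ the log-barrier correction keeps $\Psi'$ well above $-a$ throughout the interior while $\Psi'\to-\infty$ at $0$ and $1$, so $\Psi'+a$ changes sign exactly twice and $f_{a,b}$ is bimodal, with critical points $x_l<b<x_r$ near the endpoints. The globally attracting interval of Lemma~\ref{lem33}(iii) then decomposes: the immediate basin of $b$ is an open interval bounded by a repelling period-two orbit --- such an orbit can coexist with an attracting $b$ precisely because $\Psi'''<0$ fails here, so that the inequality of Lemma~\ref{conj} which rules out period-two orbits in the proof of Theorem~\ref{trajconv} reverses on the relevant chord (cf.\ the discussion after Corollary~\ref{cor:attracting}) --- and away from this basin one isolates a closed interval $J\subset(0,1)$, disjoint from a neighbourhood of $b$, with $f_{a,b}^{2}(J)\subseteq J$ and such that $g_a:=f_{a,b}^{2}|_J$ has a single critical point, i.e.\ is unimodal; the chaotic attractor of the coexistence regime lives in $J\cup f_{a,b}(J)$.

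It then remains to run the one-parameter unimodal family $a\mapsto g_a$ far enough: as $a$ increases across a subinterval $(a_1,a^{*})$ the maps $g_a$ undergo the usual unimodal route to chaos (period-doubling cascades, then chaotic parameters), so there is some $a\in(a_1,a^{*})$ for which $g_a$ has a periodic orbit of period three --- equivalently $f_{a,b}$ has a periodic orbit of period three or six, in any case not a power of $2$, equivalently some iterate of $f_{a,b}$ has a topological horseshoe. By the classical theory of interval maps this forces $h(g_a)\ge\log\tfrac{1+\sqrt5}{2}>0$, hence $h(f_{a,b})=\tfrac12 h(f_{a,b}^{2})\ge\tfrac12 h(g_a)>0$, so $f_{a,b}$ has positive topological entropy. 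For Li--Yorke chaos, Li--Yorke's ``period three'' theorem gives an uncountable scrambled set $S\subseteq J$ for $g_a$; since $\liminf_n dist(f_{a,b}^{n}x,f_{a,b}^{n}y)\le\liminf_k dist(g_a^{k}x,g_a^{k}y)=0$ and $\limsup_n dist(f_{a,b}^{n}x,f_{a,b}^{n}y)\ge\limsup_k dist(g_a^{k}x,g_a^{k}y)>0$ for all $x\neq y$ in $S$, the set $S$ is scrambled for $f_{a,b}$ as well --- alternatively one may invoke directly that an interval map with positive topological entropy is Li--Yorke chaotic. Thus the chosen $(a,b)$ satisfies all three conclusions.

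The main obstacle is making this picture quantitative for the concrete, non-elementary map $f_{a,b}(x)=\Psi^{-1}\!\bigl(\Psi(x)+a(x-b)\bigr)$, whose right-hand side involves the inverse of a transcendental function: one has to verify, for an honest value of $a$ inside the narrow window (roughly $(3.15,3.28)$), both that $J$ is genuinely $f_{a,b}^{2}$-invariant and disjoint from the basin of $b$, and that the critical orbit of $g_a$ reaches the period-three window; since the overlap is thin, the estimates must be reasonably tight. In practice I would carry this out either through explicit monotone bounds on $\Psi,\Psi'$ and $\Psi^{-1}$ at the finitely many points entering the covering relations, or --- more robustly --- by a computer-assisted verification in interval arithmetic at one explicit parameter $(a,b)$; the inequality $|f'_{a,b}(b)|<1$ is the only ingredient that needs no numerics.
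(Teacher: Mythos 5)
The locally-attracting part of your argument is fine and matches the paper: for $b=0.61$ and any $a<-2\Psi'(b)\approx 3.28$, criterion \eqref{fa'<1} makes $b$ attracting, and no numerics beyond evaluating $\Psi'(b)$ are needed. The gap is in the chaos half. Your route is: the second iterate restricted to a suitable interval $J$ is a unimodal family $g_a$, and ``as $a$ increases across $(a_1,a^*)$ the maps $g_a$ undergo the usual unimodal route to chaos, so there is some $a\in(a_1,a^*)$ with a period-three orbit of $g_a$.'' That step is not a theorem you can invoke: a one-parameter unimodal family has no a priori obligation to reach its period-three window (or any positive-entropy parameter) \emph{before} the fixed point $b$ loses stability at $a^*=-2\Psi'(b)$. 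The whole content of the proposition is that chaos occurs inside the narrow coexistence window (empirically roughly $(3.15,3.28)$), and that is exactly the quantitative fact your soft renormalization picture cannot deliver; a priori the cascade could complete only for $a>a^*$, in which case there would be no coexistence at all. You acknowledge this and defer to ``explicit monotone bounds \dots or a computer-assisted verification,'' but that deferred verification \emph{is} the proof: the paper fixes $a=3.25$, checks that $\xi(x)=\Psi(x)+a(x-b)$ is monotone on $[0.9559,0.956]$ (by locating the two zeros of $\xi'$ via a cubic in $(x-0.5)^2$), and then establishes, by comparing finitely many values of $\xi$ and $\Psi$, the covering relations $f_{a,b}^2([0.9559,0.956])\subset[0.99,0.991]$ and $f_{a,b}^6([0.9559,0.956])\subset[0.65,0.8]$, hence $(f_{a,b}^2)^3(x)<x<f_{a,b}^2(x)$ on that interval; the result of \cite{LMPY} then yields a period-6 orbit of $f_{a,b}$. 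Without an explicit certificate of this kind at a concrete $a<a^*$, your argument establishes only the (plausible) shape of the picture, not the proposition.

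Two smaller remarks. Several structural claims in your sketch (the immediate basin of $b$ being bounded by a repelling period-two orbit, the existence of an $f_{a,b}^2$-invariant interval $J$ on which $f_{a,b}^2$ is unimodal and disjoint from that basin) are also unproved and, notably, are not needed: the paper bypasses them entirely by verifying the orbit inequality $(f^2)^3(x)<x<f^2(x)$ directly on a small explicit interval. Your endgame, on the other hand, is sound and matches the paper's: once a periodic point of period $6$ (not a power of $2$) is found, positive topological entropy follows from \cite{Mis} (your horseshoe bound $h\ge\frac12\log\frac{1+\sqrt5}{2}$ is a correct refinement, using $h(f^2)=2h(f)$), and Li--Yorke chaos follows either from the scrambled set of $g_a$ transferring to $f_{a,b}$ or from positive entropy.
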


\begin{proof}
Let us take $b=0.61$ and $a=3.25$ (see the graph of $f_{a,b}$ in
Figure \ref{fig:fabnashchaos}).
\begin{figure}
\centering
\includegraphics[width=7cm,height=7cm]{./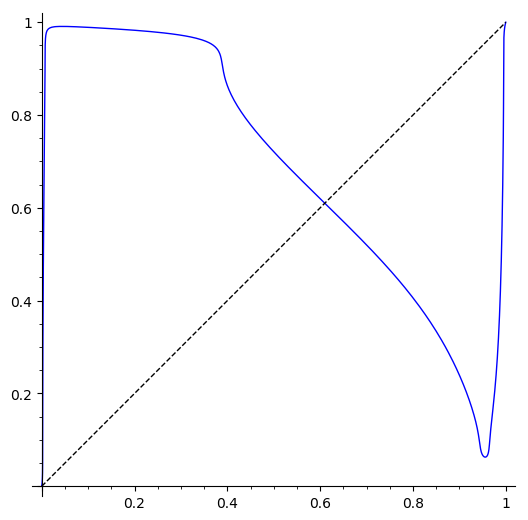}
%\caption{}
\caption{Graph of $f_{a,b}$ for $a=3.25$, $b=0.61$ generated by the regularizer
$r(x)=(1-x)\log (1-x)+x\log x-0.4167 \log (-x^2+x+0.11)$.}
\label{fig:fabnashchaos}
\end{figure}
Set
\[
\xi (x):=\Psi(x)+a(x-b)=\log(1-x)-\log x+0.4167\cdot\left(\frac{1}
{1.1-x}-\frac{1}{x+0.1}\right)+3.25\cdot(x-0.61).
\]
Since $a<3.282596521095308\approx-2\Psi'(b)$, the fixed point $b$ is
attracting.

To show that $f_{a,b}$ is chaotic, we will prove that $f_{a,b}$ has a
periodic point of period 6. With this aim, we will show that
$(f^2)^3(x)<x<f^2(x)$ for any $x\in [0.9559,0.956]$. We start by
showing that $\xi (x)$ is monotone on $[0.9559,0.956]$. Formula for
the derivative of $\xi(x)$ is
\[
\xi'(x)=-\frac{1}{1-x}-\frac 1x+0.4167\cdot\left(\frac{1}{(1.1-x)^2}
+\frac{1}{(x+0.1)^2}\right)+3.25.
\]

Set $z=(x-0.5)^2$. Then $\xi'(x)=0$ if and only if $g(z)=0$, where
\[
g(z)=3.25\cdot z^3-1.3191\cdot z^2+0.377874\cdot z-0.050706.
\]
We have
\[
g'(z)=9.75\cdot z^2-2.6382\cdot z+0.377874,
\]
and the discriminant of this quadratic polynomial is negative.
Therefore, $g$ has only one zero (approximately $0.2077259768645677$),
so $\xi'$ has only two zeros, symmetric with respect to 0.5. Thus,
as
\[
\xi'(0.9559)\approx-0.03051955745677404,\;\;
\xi'(0.956)\approx-0.05413532613604133
\]
there is no zero of $\xi'$ between these two points.
Moreover, those computations give us an approximation to both zeros of
$\xi'$: $0.0442303467050842$ and $0.9557696532949158$.

Now we look at the first six images of $[0.9559,0.956]$:

\begin{multicols}{2}
\noindent
$\Psi(0.063)$\tabto{51pt} $\approx 0.5449390463486314$\\
$\xi(0.956) $\tabto{51pt} $ \approx 0.5450794481395858$\\
$\xi(0.9559)$\tabto{51pt} $ \approx 0.5450836794177281$\\
$\Psi(0.062)$\tabto{51pt} $ \approx 0.5458384284441133$\\

\noindent
$\Psi(0.991)$\tabto{48pt} $ \approx -1.26049734857964$\\
$\xi(0.062) $\tabto{48pt} $ \approx -1.235161571555887$\\
$\xi(0.063) $\tabto{48pt} $ \approx -1.232810953651368$\\
$\Psi(0.99) $\tabto{48pt} $ \approx -1.189231609934426$\\
\end{multicols}
\vspace{-0.8cm}

\begin{multicols}{2}
\noindent
$\Psi(0.52)$\tabto{47pt} $ \approx -0.03369120600501601$\\
$\xi(0.991)$\tabto{47pt} $ \approx -0.02224734857964017$\\
$\xi(0.99) $\tabto{47pt} $ \approx 0.04576839006557365$\\
$\Psi(0.47)$\tabto{47pt} $ \approx 0.05052025169168718$\\

\noindent
$\Psi(0.76)$\tabto{44pt} $ \approx -0.4116261583651984$\\
$\xi(0.47) $\tabto{44pt} $ \approx -0.4044797483083129$\\
$\xi(0.52) $\tabto{44pt} $ \approx -0.3261912060050159$\\
$\Psi(0.69)$\tabto{44pt} $ \approx -0.3112461911278587$\\
\end{multicols}
\vspace{-0.8cm}

\begin{multicols}{2}
\noindent
$\Psi(0.54)$\tabto{44pt} $ \approx -0.06732925721803665$\\
$\xi(0.69) $\tabto{44pt} $ \approx -0.05124619112785883$\\
$\xi(0.76) $\tabto{44pt} $ \approx 0.07587384163480165$\\
$\Psi(0.45)$\tabto{44pt} $ \approx 0.08411125490271053$\\

\noindent
$\Psi(0.8) $\tabto{44pt} $ \approx -0.4602943611198909$\\
$\xi(0.45) $\tabto{44pt} $ \approx -0.4358887450972894$\\
$\xi(0.54) $\tabto{44pt} $ \approx -0.2948292572180365$\\
$\Psi(0.65)$\tabto{44pt} $ \approx -0.2486392084062237$.
\end{multicols}
\vspace{-0.4cm}

We have $f_{a,b}(x)=\Psi^{-1}(\xi(x))$, so $\Psi(f_{a,b}(x))=\xi(x)$.
Write $\langle x,y \rangle $ for $[x,y]$ or $[y,x]$. If
$\langle \xi(x),\xi(y)\rangle \subset\langle \Psi(z),\Psi(w)\rangle$
and $\xi$ is monotone on $\langle x,y\rangle$, then $\langle
f_{a,b}(x),f_{a,b}(y)\rangle \subset\langle z,w\rangle $. Thus, the
computations show that
\[
f_{a,b}^2([0.9559,0.956])\subset[0.99,0.991]
\]
and
\[
f_{a,b}^6([0.9559,0.956])\subset[0.65,0.8].
\]
Therefore, for any $x\in[0.9559,0.956]$ we have
\[
(f_{a,b}^2)^3(x)<x<(f_{a,b}^2)(x),
\]
so by theorem from \cite{LMPY}, $f_{a,b}^2$ has a periodic point of
period 3 and $f_{a,b}$ has a periodic point of period 6. Thus, because
$f_{a,b}$ has a periodic point of period that is not a power of 2, the
topological entropy $h(f_{a,b})$ is positive (see \cite{Mis}) and it
is Li-Yorke chaotic.
\end{proof}

\begin{corollary}

There exist FoReL dynamics such that when applied to symmetric linear congestion games with only two strategies/paths the resulting dynamics have 
\begin{itemize}
\item a set of positive measure of initial conditions that converge to the unique and socially optimum Nash equilibrium and
\item %as well as 
%a set of positive measure of initial conditions 
an uncountable scrambled set for which trajectories exhibit Li-Yorke chaos,  
\item periodic orbits of all possible even periods. 
\end{itemize}
Thus,  the (long-term) social cost depends critically on the initial condition.

\end{corollary}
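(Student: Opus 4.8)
The plan is to package the preceding Proposition --- which already carries all the hard dynamical content --- into the language of congestion games. Recall from Section~\ref{s:prelim} that in our model $a=N\varepsilon$ and $b=\beta$, with $\alpha+\beta=1$. So I would fix the FoReL dynamics generated by the modified log-barrier regularizer $r(x)=(1-x)\log(1-x)+x\log x-0.4167\log(-x^2+x+0.11)$, consider the single-commodity two-parallel-link linear congestion game with $\beta=0.61$ (hence $\alpha=0.39$) --- which is symmetric in the sense that every infinitesimal player chooses among the same two paths --- and pick any $N>0$, $\varepsilon>0$ with $N\varepsilon=3.25$. With these choices the map induced on the flow $x_n$ is precisely $f_{a,b}$ with $(a,b)=(3.25,0.61)$, the map analyzed in the Proposition.

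Then I would read off the three bullet points. For the first, note that by Lemma~\ref{lem33} the endpoints $0$ and $1$ are repelling and $b$ is the only interior fixed point; in the game, equality of path costs holds exactly at flow $b=\beta$ (there $c_1(b)=\alpha\beta N=c_2(1-b)$), the boundary flows are not Nash, so $b$ is the unique Nash equilibrium, and by Section~\ref{s:prelim} it is socially optimal. By the Proposition, $b$ is an attracting fixed point, so it has an open --- hence positive Lebesgue measure --- basin neighborhood $U$ on which $f_{a,b}^n\to b$ pointwise. The second bullet is immediate, since the Proposition states that $f_{a,b}$ is Li--Yorke chaotic. For the third, the proof of the Proposition exhibits a periodic point of period $6$ for $f_{a,b}$; since $6$ precedes every even integer in the Sharkovsky order, Sharkovsky's Theorem yields periodic orbits of $f_{a,b}$ of every even period.

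For the closing clause I would use strict convexity of the social cost. The social cost at flow $x$ is proportional to $\alpha x^{2}+\beta(1-x)^{2}$, a strictly convex function with unique minimizer $x=b$. For an initial condition in $U$ the flows converge to $b$, so the social cost converges to the optimum. For an initial condition on a periodic orbit $\{x_0,\dots,x_{m-1}\}$ of period $m\ge 2$ (these exist by the third bullet) the center of mass equals $b$ by the Corollary to Theorem~\ref{thm:cesaro}, yet the orbit is non-constant, so Jensen's inequality forces $\tfrac1m\sum_i\big(\alpha x_i^{2}+\beta(1-x_i)^{2}\big)>\alpha b^{2}+\beta(1-b)^{2}$; the same bound (via Corollary~\ref{cmmeas} and Birkhoff's theorem) holds for trajectories attracted to such an orbit or to a chaotic attractor. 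Hence the long-run average social cost is optimal for a positive-measure set of initial conditions and strictly suboptimal for others, i.e.\ it depends critically on the initial condition.

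The main --- indeed essentially the only --- obstacle is bookkeeping: verifying that $b=0.61$ really is the unique, socially optimal Nash equilibrium (which rests on Section~\ref{s:prelim} and Lemma~\ref{lem33}) and confirming the Sharkovsky step that a period-$6$ orbit forces orbits of every even period. All the substantive difficulty --- the period-$6$ interval estimate, the attraction of $b$, the positive topological entropy, and the Li--Yorke chaos --- has already been absorbed into the Proposition, so no genuinely new dynamical argument is needed.
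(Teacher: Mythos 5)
Your proposal is correct and follows essentially the same route as the paper, which states this corollary as an immediate consequence of the preceding Proposition: the attracting fixed point $b$ supplies an open (hence positive-measure) basin, the Proposition gives the Li--Yorke scrambled set, and the period-$6$ orbit exhibited in its proof yields all even periods via Sharkovsky's theorem, with the identification $a=N\varepsilon=3.25$, $b=\beta=0.61$ and the Section~\ref{s:prelim} facts giving uniqueness and social optimality of the Nash flow. Your additional Jensen-type argument for the closing clause is a harmless (and correct) elaboration beyond what the paper records.
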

FoReL dynamics induced by this regularizer manifests drastically different behaviors that depend on the initial condition. 

\begin{comment}
{\color{red}
\begin{itemize}
\item we perturb Shannon regularizer with the log-barrier (type) regularizer. For perturbation large enough we get attracting Nash and chaos 
\item discussion: rare phenomena; perturbation of well-known MWU 
\item calculations;
\item $f_{ab}$ for $a=3.25$
\item Thip's figures 
\end{itemize}}
\end{comment}

\section{Behavior for sufficiently large $a$}\label{s:behavior}
\subsection{Non-convergence for sufficiently large $a$}
\label{ssec:chaos}

In this subsection, we study what happens as we fix $b$ and let $a$ be arbitrarily large\footnote{By \eqref{ab} it reflects the case when we fix cost functions (and learning rate $\varepsilon$) and increase the total demand $N$.}. First, we study the asymmetric case, namely $b\neq 1/2$. We show chaotic behavior of our dynamical system for $a$ sufficiently large, %As there is no universally accepted definition of chaotic behavior of a dynamical system. 
that is we will show that if $a$ is sufficiently large then $f_{a,b}$ is Li-Yorke chaotic, has periodic orbits of all periods and positive topological entropy.

The crucial ingredient of our analysis is the existence of periodic orbit of period 3.

\begin{theorem} \label{period3}
If $b \in (0,1) \backslash\{\frac 12\}$, then there exists $a_b$ such that if $a > a_b$ then $f_{a,b}$ has a periodic point of period 3.
\end{theorem}

\begin{proof}
By Lemma \ref{lem33}.\ref{commute}, without loss of generality, we may assume $b \in (0, \frac 12)$. 
We will show that  there exists $x_0 \in (0,1)$ such that $f_{a,b}^3(x_0) < x_0 < f_{a,b}(x_0)$.

Fix $a > 0$ and $b,x \in (0,1)$. We set $x_n = f_{a,b}^n(x)$, then formula \eqref{iter} holds. Hence $f_{a,b}(x) > x$ if and only if $x < b$ and, because $\Psi^{-1}$ is decreasing, $f_{a,b}^3 (x) < x$ is equivalent to $x + f_{a,b}(x) + f_{a,b}^2(x) > 3b$.

From the fact that $b\in (0,\frac 12)$ we have that $3b-1 < b$. So we can take $x_0>0$ such that $3b-1 < x_0 < b$. Then $f_{a,b}(x_0) > x_0$. Moreover
\[
 \lim_{a \to \infty} f_{a,b}(x_0) = \lim_{a \to \infty} \Psi^{-1}\left( \Psi(x_0) + a(x_0-b) \right) = 1.
\]
Thus, since $3b-x_0 < 1$, there exists $a_b >0$ such that if $a > a_b$, then $f_{a,b}(x_0) > 3b-x_0$, so $x_0 + f_{a,b}(x_0) + f_{a,b}^2(x_0) > 3b$. Hence, if $a > a_b$, then $f_{a,b}^3(x_0) < x_0$.

Now we conclude  %[[[HERE WE SHOULD ADD THE NUMBER OF THIS THEOREM OR MAYBE EVEN REPEAT IT.]]] 
 that  $f_{a,b}$ has a periodic point of period 3 for $a>a_b$, from theorem from \cite{LMPY}, which implies that if $f^n(x)<x<f(x)$ for some odd $n>1$, then $f$ has a~periodic point of period $n$.

\end{proof}

%\vspace{0.3cm}

By the Sharkovsky Theorem (\cite{sha}), existence of a periodic orbit of period 3 implies existence of periodic orbits of all periods, and by the result of \cite{LY}, period 3 implies Li-Yorke chaos. Moreover, because $f_{a,b}$ has a periodic point of period that is not a power of 2, the topological entropy $h(f_{a,b})$ is positive (see \cite{Mis}). 
Thus:

\begin{corollary}\label{chaos}
If $b\in (0,1)\setminus\{1/2\}$, then there exists $a_b$ such that if $a>a_b$ then $f_{a,b}$ has periodic orbits of all periods,  has positive topological entropy
and is Li-Yorke chaotic.
\end{corollary}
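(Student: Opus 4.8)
The plan is to derive everything from Theorem \ref{period3}, which already hands us a periodic point of period $3$ for all $a > a_b$ (when $b \neq 1/2$). Once a period-$3$ orbit is available, the three conclusions follow from standard one-dimensional dynamics, so the work is essentially bookkeeping: we invoke the right classical theorems in the right order and note that $f_{a,b}$ is a continuous self-map of the compact interval $[0,1]$, which is the only hypothesis those theorems need.

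\medskip

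\emph{Step 1 (periodic orbits of all periods).} Fix $b \in (0,1)\setminus\{1/2\}$ and let $a_b$ be as in Theorem \ref{period3}; take any $a > a_b$. By that theorem $f_{a,b}$ has a periodic point of period $3$. Period $3$ is the maximal element in the Sharkovsky ordering, so by the Sharkovsky Theorem \cite{sha}, $f_{a,b}$ has periodic points of every period $n \geq 1$. This gives the first bullet.

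\medskip

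\emph{Step 2 (Li--Yorke chaos).} Apply the Li--Yorke theorem \cite{LY}: for a continuous interval map, the existence of a period-$3$ point implies the existence of an uncountable scrambled set, i.e.\ $f_{a,b}$ is Li--Yorke chaotic in the sense of Definition \ref{LYchaos-def}. This gives the third bullet.

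\medskip

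\emph{Step 3 (positive topological entropy).} Since $f_{a,b}$ has a periodic point of period $3$, and $3$ is not a power of $2$, the characterization of zero-entropy interval maps (maps of zero entropy have periodic points only of periods that are powers of $2$) forces $h(f_{a,b}) > 0$; see \cite{Mis}. This gives the second bullet and completes the proof.

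\medskip

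The only genuine content is Theorem \ref{period3}; the corollary itself has no real obstacle, since each of the three implications is a black-box citation once the period-$3$ point is in hand. The one small point worth stating explicitly is that all of Sharkovsky's theorem, the Li--Yorke theorem, and the entropy characterization are theorems about continuous maps of a compact interval, and $f_{a,b}\colon[0,1]\to[0,1]$ is exactly such a map by the remarks following \eqref{eq:dyn2} (continuity of $\Psi^{-1}$ and the boundary conventions $f_{a,b}(0)=0$, $f_{a,b}(1)=1$), so the hypotheses are met without further checking.
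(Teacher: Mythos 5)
Your proposal is correct and follows exactly the paper's own route: invoke Theorem \ref{period3} for a period-$3$ point, then cite the Sharkovsky Theorem \cite{sha} for all periods, the Li--Yorke result \cite{LY} for chaos, and the fact that a period not a power of $2$ forces positive entropy \cite{Mis}. Nothing to add.
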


%Corollary \ref{chaos} implies that for different costs of paths there exists a threshold such that if  $a$ (and thus the total demand $N$) is above this threshold, then we observe complex, chaotic behavior of agents.
This result has an implication in non-atomic routing games. Recall that the parameter $a$ expresses the normalized total demand. Thus, Corollary~\ref{chaos} implies that when the costs (cost functions) of paths are different,
%i.e. when an interior equilibrium flow is not the $50\%-50\%$ split, 
then increasing the total demand of the system will inevitably lead to chaotic behavior.

%\subsection{Symmetric case}
Now we consider the symmetric case, when $b  =\frac 12$, which
corresponds to equal coefficients of the cost functions,
$\alpha=\beta$. To simplify the notation we denote $f_a=f_{a,1/2}$.%, so
%\[ f_a(x)=\Psi^{-1}(\Psi(x)+a(x-1/2)).\]

\begin{theorem}\label{trajf}
If the parameter $a$ is small enough, then all trajectories of $f_a$ starting from $(0,1)$ converge to the attracting fixed point $1/2$.
There exists $a_b$ such that if $a>a_b$, then all points from $(0,1)$ (except countably many points, whose trajectories eventually fall into the repelling point $1/2$) are attracted by periodic attracting orbits of the form $\{\sigma_a,1-\sigma_a\}$, where $0<\sigma_a<1/2$. Moreover, if there exists $\delta>0$ such that $\Psi$ is convex on $(0,\delta)$, then there exists a unique attracting orbit $\{\sigma_a,1-\sigma_a\}$, which attracts trajectories of all points from $(0,1)$, except countably many points, whose trajectories eventually fall into the repelling fixed point $1/2$.
\end{theorem}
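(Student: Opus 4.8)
The plan is to treat the three assertions separately, exploiting the symmetry $\varphi\circ f_a=f_a\circ\varphi$ from Lemma \ref{lem33}.\ref{commute}. For the first assertion, observe that by \eqref{fa'} we have $f_a'(1/2)=(\Psi'(1/2)+a)/\Psi'(1/2)=1+a/\Psi'(1/2)$, so $1/2$ is attracting precisely when $a\in(0,-2\Psi'(1/2))$. Since $b=1/2$ is the midpoint, every hypothetical period-2 orbit $\{x_0,x_1\}$ has center of mass $1/2$ (Corollary to Theorem \ref{thm:cesaro}), so $x_1=1-x_0$; then the period-2 equation $\Psi(x_1)-\Psi(x_0)=a(x_0-b)=\tfrac a2(x_0-x_1)$ combined with $\Psi(1-x_0)=-\Psi(x_0)$ gives $-2\Psi(x_0)=\tfrac a2(x_0-(1-x_0))=\tfrac a2(2x_0-1)$, i.e. $\Psi(x_0)/(x_0-\tfrac12)=-a/2$. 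For $a$ small, $|{-a/2}|<|\Psi'(1/2)|$; but $\Psi$ decreasing with $\Psi(1/2)=0$ forces $\Psi(x_0)/(x_0-\tfrac12)\le \Psi'(\tfrac12)<-a/2$ — wait, one must be careful with the direction, but the point is that the secant slope of $\Psi$ from $1/2$ cannot equal $-a/2$ when $a$ is small because $\Psi$ is steep near the endpoints. Hence no period-2 orbit exists for small $a$, and then by \cite{blockcop}, Chapter VI, Proposition 1 (as in the proof of Theorem \ref{trajconv}), every trajectory converges to a fixed point; since $0,1$ are repelling, all interior trajectories converge to $1/2$.

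For the second assertion, I would show that for large $a$ the fixed point $1/2$ is repelling and $f_a$ has a period-2 orbit. Repelling is immediate once $a>-2\Psi'(1/2)$. For the period-2 orbit: by symmetry it suffices to find $x_0\in(0,1/2)$ with $f_a(x_0)=1-x_0$, equivalently $\Psi(1-x_0)=\Psi(x_0)+a(x_0-\tfrac12)$, i.e. $-2\Psi(x_0)=a(x_0-\tfrac12)$, i.e. $h_a(x_0):=2\Psi(x_0)+a(x_0-\tfrac12)=0$. As $x_0\to0^+$, $2\Psi(x_0)\to+\infty$ so $h_a(x_0)\to+\infty$; at $x_0=\tfrac12$, $h_a(\tfrac12)=0$ but $h_a'(\tfrac12)=2\Psi'(\tfrac12)+a>0$ for $a>-2\Psi'(1/2)$, so $h_a$ is increasing through $0$ at $x_0=\tfrac12$, hence $h_a<0$ just left of $\tfrac12$; by the intermediate value theorem there is a zero $\sigma_a\in(0,\tfrac12)$. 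This $\{\sigma_a,1-\sigma_a\}$ is a period-2 orbit. To see that it is attracting and attracts a.e. point, I would pass to $g:=f_a^2$, which fixes $1/2$ (repelling for $g$), $\sigma_a$, and $1-\sigma_a$, and commutes with $\varphi$. On the globally attracting interval $I$ from Lemma \ref{lem33}.\ref{generalinterval}, $g$ maps $(0,1)$ into itself; I would argue that $g$ restricted to the half-interval on one side of $1/2$ within $I$ is, for large $a$, a map with a single attracting fixed point capturing a full-measure set — this is where one must rule out period-2 orbits of $g$ (i.e. period-4 orbits of $f_a$) and higher complexity, which is exactly what the convexity hypothesis $\Psi$ convex on $(0,\delta)$ is for.

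For the third assertion, under the extra hypothesis that $\Psi$ is convex on some $(0,\delta)$, I would upgrade uniqueness: convexity of $\Psi$ on $(0,\delta)$ makes $h_a$ convex there, so (together with $h_a(\tfrac12)=0$ and $h_a'(\tfrac12)>0$) the zero $\sigma_a$ is unique in a neighborhood of the relevant region, and one can further show $f_a^2$ has no other periodic points, so that $\{\sigma_a,1-\sigma_a\}$ attracts every interior point except the countably many preimages of $1/2$; the argument again runs through \cite{blockcop}-type results ruling out other asymptotic behavior once periodic orbits of $f_a^2$ beyond the fixed points are excluded. The main obstacle, I expect, is precisely the global (not merely local) attraction claims in the second and third parts: passing from "$\{\sigma_a,1-\sigma_a\}$ exists and is locally attracting" to "it attracts all but countably many points" requires controlling the one-dimensional dynamics of $f_a^2$ on the invariant interval — showing it behaves like a unimodal map with a single attracting fixed point — and carefully accounting for the countable set of points landing exactly on the repelling fixed point $1/2$. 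The symmetry $\varphi\circ f_a=f_a\circ\varphi$ and the monotonicity/steepness of $\Psi$ will be the key tools, and the convexity-on-$(0,\delta)$ hypothesis is what rules out the coexistence pathologies exhibited in Section \ref{sec:coexistsence}.
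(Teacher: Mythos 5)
Your first assertion is handled correctly (ruling out period-2 orbits for small $a$ via the secant slopes of $\Psi$ at $1/2$ and then invoking \cite{blockcop}), though the reason should be stated as: the ratio $\Psi(x)/(\tfrac12-x)$ is continuous and positive on $(0,\tfrac12)$ with positive limits $+\infty$ at $0^+$ and $-\Psi'(1/2)$ at $\tfrac12^-$, hence bounded away from $0$, so no solution of the period-2 equation exists once $a/2$ is below that bound --- not ``steepness near the endpoints.'' Your IVT construction of $\sigma_a$ for $a>-2\Psi'(1/2)$ coincides with the paper's. The genuine gap is in the second and third assertions: you never prove, only announce, that all but countably many interior points are attracted to orbits of the form $\{\sigma_a,1-\sigma_a\}$, and you assign the job of ``ruling out period-2 orbits of $f_a^2$ and higher complexity'' to the convexity hypothesis. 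That is misplaced: the second assertion of the theorem carries no convexity hypothesis at all, so your plan cannot prove it as stated, and you supply no mechanism for excluding period-4 or higher periodic orbits (which, by the coexistence example of Section \ref{sec:coexistsence}, cannot be dismissed by local stability considerations alone).

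The missing idea is the paper's structural argument, valid for every $a$: set $g_a=\varphi\circ f_a=f_a\circ\varphi$ (Lemma \ref{lem33}.\ref{commute}); since $\varphi$ is an involution, $g_a^2=f_a^2$. From the iteration formula, an interior solution of $f_a^2(x)=x$ must satisfy $x+f_a(x)-1=0$, i.e.\ $g_a(x)=x$; hence on the symmetric invariant interval every fixed point of $g_a^2$ is already a fixed point of $g_a$, so $g_a$ has no period-2 points, and by Sharkovsky's theorem no periodic points other than fixed points. Consequently every $g_a$-trajectory converges to a fixed point of $g_a$, and every $f_a$-trajectory in $(0,1)$ converges either to a fixed point of $f_a$ or to a symmetric period-2 orbit --- with no convexity assumption. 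For $a>-2\Psi'(1/2)$ the point $1/2$ is repelling, so only the countably many points eventually mapped onto $1/2$ can converge to it, and all remaining trajectories converge to orbits $\{\sigma_a,1-\sigma_a\}$; this is the second assertion. Convexity of $\Psi$ on $(0,\delta)$ is used only for the third assertion: for $a$ large all solutions of $2\Psi(\sigma)=-a(\sigma-\tfrac12)$ in $(0,\tfrac12)$ lie in $(0,\delta)$, where a convex function meets the affine function $\gamma_a$ at most twice, yielding uniqueness of $\sigma_a$. Without this structural step your parts two and three remain a plan rather than a proof.
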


\begin{proof}
By~Lemma \ref{lem33}.\ref{commute}  the maps $f_a$ and $\varphi$ commute. Set $g_a=\varphi\circ
f_a=f_a\circ\varphi$. Since $\varphi$ is an involution, we have $g_a^2=f_a^2$.
We show that the dynamics of $f_a$ is simple, no matter how large $a$
is.

%\begin{proof}
We aim to find fixed points and points of period 2 of $f_a$ and
$g_a$. Clearly,
\[
f_a(0)=0,\ \ \ f_a(1)=1,\ \ \ g_a(0)=1,\ \ \ g_a(1)=0.
\]

By~\eqref{iter} we have
\[
f_a^2(x)=\Psi^{-1}(\Psi(x)+a(x+f_a(x)-1)),
\]
so the fixed points of $f_a^2$ are $0$, $1$ and the solutions to
$x+f_a(x)-1=0$, that is, to $g_a(x)=x$. Thus, the fixed points of
$g_a^2$ (which, as we noticed, is equal to $f_a^2$) are the fixed
points of $g_a$ and $0$ and $1$.

We can choose the invariant interval $I_a=I_{a,1/2}$ symmetric, so
that $\varphi(I_a)=I_a$. Let us look at $G_a=g_a|_{I_a}:I_a\to I_a$. All
fixed points of $G_a^2$ are also fixed points of $G_a$, so $G_a$ has
no periodic points of period 2. By the Sharkovsky Theorem, $G_a$ has
no periodic points other than fixed points. For such maps it is known
(see, e.g.,~\cite{blockcop}) that the $\omega$-limit set of every
trajectory is a singleton of a fixed point, that is, every trajectory
converges to a fixed point. If $x\in(0,1)\setminus I_a$, then the
$g_a$-trajectory of $x$ after a finite time enters $I_a$, so
$g_a$-trajectories of all points of $(0,1)$ converge to a fixed point
of $g_a$ in $I_a$. Observe that a fixed point of $g_a$ can be a fixed
point of $f_a$ (other that 0, 1) or a periodic point of $f_a$ of
period 2. Thus, the $f_a$-trajectory of every point of $(0,1)$
converges to a fixed point or a periodic orbit of period $2$ of $f_a$,
other than $0$ and $1$.
%\end{proof}

%Let us identify the possible limits from the above theorem.

%\begin{proof}
Observe now that $1/2$ is a fixed point of both $f_a$ and $g_a$.
The fixed points of $g_a$ in $[0,1/2]$  are
the solutions of the equation $g_a(x)=x$, which
is equivalent to $f_a(x)=1-x$, further to \[\Psi(x)+a(x-1/2)=\Psi(1-x)\]
and finally, by Proposition \ref{psiproperties}.\ref{3i}, to \[2\Psi(x)=-a (x-1/2).\]
Define $\gamma_a(x)=-a/2 (x-1/2)$. We look for $\sigma_a\in (0,1/2)$ such that \begin{equation}\label{Psga} \Psi(\sigma_a)=\gamma_a(\sigma_a).\end{equation} We know that $\Psi(1/2)=\gamma_a(1/2)=0$ and $\gamma_a'(1/2)=-\frac a2$.
As $\Psi'(1/2)<0$ ($\Psi$ is strictly decreasing) there is no solution of \eqref{Psga} in $(0,1/2)$ for sufficiently small $a$. Then, $1/2$ is the only fixed point of $g_a$ in $(0,1)$. Thus, $1/2$ will attract all points from $(0,1)$.

If $\gamma_a'(1/2)<\Psi'(1/2)$, then there exists $x\in (0,1/2)$ such that $\gamma_a(x)>\Psi(x)$. Because $\lim_{x\to 0^+} \Psi(x)=+\infty$ and $\lim_{x\to 0^+} \gamma_a(x)=a/4$ and both functions are continuous, there exists $\sigma_a\in (0,1/2)$ such that $\Psi(\sigma_a)=\gamma_a(\sigma_a)$.
Finally, $\gamma_a'(1/2)<\Psi'(1/2)$ if and only if $a>-2\Psi'(1/2)$.

Lastly, if $\Psi$ is convex on some neighborhood of zero, that is $(0,\delta)$, then we can choose $a$ (sufficiently large) such that all solutions of \eqref{Psga} in $(0,\frac 12)$ lay in $(0,\delta)$. From the fact that $\Psi$ is convex on this interval and $\gamma_a$ is an affine function we obtain uniqueness of $\sigma_a$.
\end{proof}

Theorem \ref{trajf}  has a remarkable implication in non-atomic routing games. %Recall that the parameter $a$ expresses the normalized total flow/demand; thus, Corollary~\ref{chaos} implies that when the game is asymmetric,
It implies that if cost of both paths is the same, then there is a threshold such that if the total demand will cross this threshold, then starting from almost any initial condition the system will oscillate, converging to the symmetric periodic orbit of period 2, never converging to the Nash equilibrium.
%i.e. when an interior equilibrium flow is not the $50\%-50\%$ split, increasing the total demand of the system will inevitably lead to chaotic behavior, regardless of the form of the cost functions.
\section{Experimental results}
\label{s:experimental}

\begin{figure}
\centering
\begin{subfigure}[b]{.9\textwidth}
\includegraphics[width=\textwidth,height=8cm]{./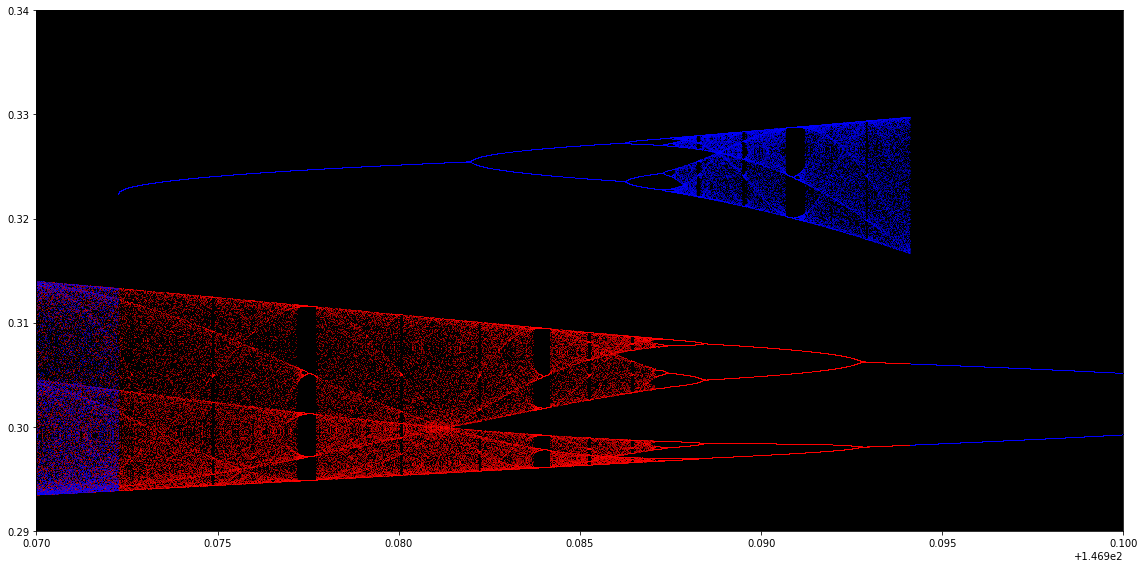}
%\caption{}
\end{subfigure}\qquad
\begin{subfigure}[b]{.9\textwidth}
\includegraphics[width=\textwidth,height=8cm]{./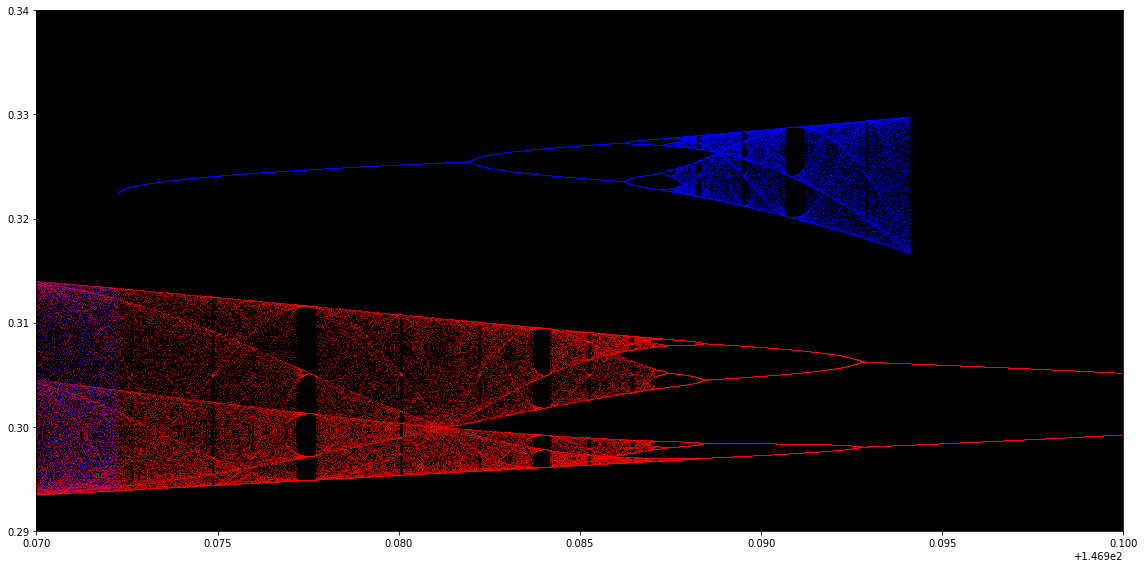}
%\caption{}
\end{subfigure}
\caption{{\bf Simultaneous creation and destruction of different attractors.} The bifurcation diagrams for $f_{a,b}$ where the dynamics is determined by taking (negative) log-barrier regularizer with parameter $b=0.61$. On the horizontal axis  the parameter $a$ is between $146.97$ and $147$, and on the vertical axis values of $f_{a,b}$ between $0.27$ and $0.34$ are shown. %and $a\in [146.97,147]$. The diagrams show values of $f_{a,b}$ in $[0.27,0.34]$. 
As starting points for bifurcation diagrams two critical points of $f_{a,b}$ are taken (regularity of this map, see Appendix \ref{s:regularity}, guarantees that their trajectories detect all attractors).
 ---  red refers to the critical point in $(0,0.5)$ and blue to the critical point in $(0.5,1)$. Each critical point is iterated 4000 times, visualizing the last 200 iterates. On the top picture first red and then blue trajectories are drawn and on the bottom one  first blue and then red. We observe the collapse of the red attractor (built on the left critical point) with the simultaneous creation of the blue one (built on the right critical point).} 
  \label{fig:minisimult}
\end{figure}

In this section we report complex behaviors in bifurcation diagrams of FoReL dynamics. We investigate the structures of the attracting periodic orbits and chaotic attractors associated with the interval map $f_{a,b}: [0,1] \mapsto [0,1]$ defined by \eqref{eq:dyn2}. In the asymmetric case, that is when $b$ differs from $0.5$, the standard equilibrium analysis applies when the fixed point $b$ is stable, which is when $ |f'_{a,b}(b)| \le 1$, or equivalently when $ a \le -2\Psi'(b)$. Therefore, as we argued in the previous section, in this case the dynamics will converge toward the fixed point $b$ whenever $a < -2\Psi'(b)$. However, when $a \ge -2\Psi'(b)$ there is no attracting fixed point. Moreover, a chaotic behavior of trajectories emerges when $a$ is sufficiently large, as the period-doubling bifurcations route to chaos is guaranteed to arise.

In particular, we study the attractors of the map $f_{a,b}$ generated by the log-barrier regularizer (see Example \ref{arimoto} with $\eta(x)=\log x$) and by the Havrda-Charv\'{a}t-Tsallis regularizer for $q=0.5$ (see Example \ref{HCTent}). Note that for both of these regularizers, we have that $\Psi'''<0$. Note also that the functions $\Psi$ for these regularizers can be found in Table \ref{tab:table1}.

We first focus on the log-barrier regularizer\footnote{From the regularity of the map $f_{a,b}$ (see Appendix \ref{s:regularity}), we know that every limit cycle of the dynamics generated by $f_{a,b}$ can be found by studying the behavior of the critical points of $f_{a,b}$. Therefore, all attractors of this dynamics can be revealed by following the trajectories of these two critical points (as in Figure \ref{fig:minisimult}).}. Figure \ref{fig:minisimult} reveals an unusual bifurcation phenomenon, which, to our knowledge, is not known in other natural interval maps. We observe simultaneous evolution of two attractors in the opposite directions: one attractor, generated by the trajectory of the left critical point, is shrinking, while the other one, generated by the trajectory of the right critical point, is growing. Figure \ref{logbar153157} shows another unusual bifurcation phenomenon: a chaotic attractor arises via period-doubling bifurcations and then collapses. After that, the trajectories of the critical points, one after the other, jump, and then they together follow a period-doubling route to chaos once more.

Finally we study the bifurcation diagrams generated from Havrda-Charv\'{a}t-Tsallis regularizer with $q=0.5$. In Figure \ref{figTs397540} we observe a finite number of period-doubling (and period-halving) bifurcations, a behavior that does not lead to chaos. Nevertheless, as $a$ increases from 39.915 to 39.93, the trajectory of the right critical point leaves the attractor which it shared with the trajectory of the left critical point, and builds a separate chaotic attractor.
When chaos arises, however, we observe that the induced dynamics of the log-barrier regularizer and of the Havrda-Charv\'{a}t-Tsallis regularizer with $q=0.5$ both exhibit period-doubling routes to chaos though the regularizers are starkly different, see Figure \ref{figLB10210} and Figure \ref{figTs454} respectively.

\begin{figure}
\centering
\begin{subfigure}[b]{0.9\textwidth}
\includegraphics[width=\textwidth,height=8cm]{./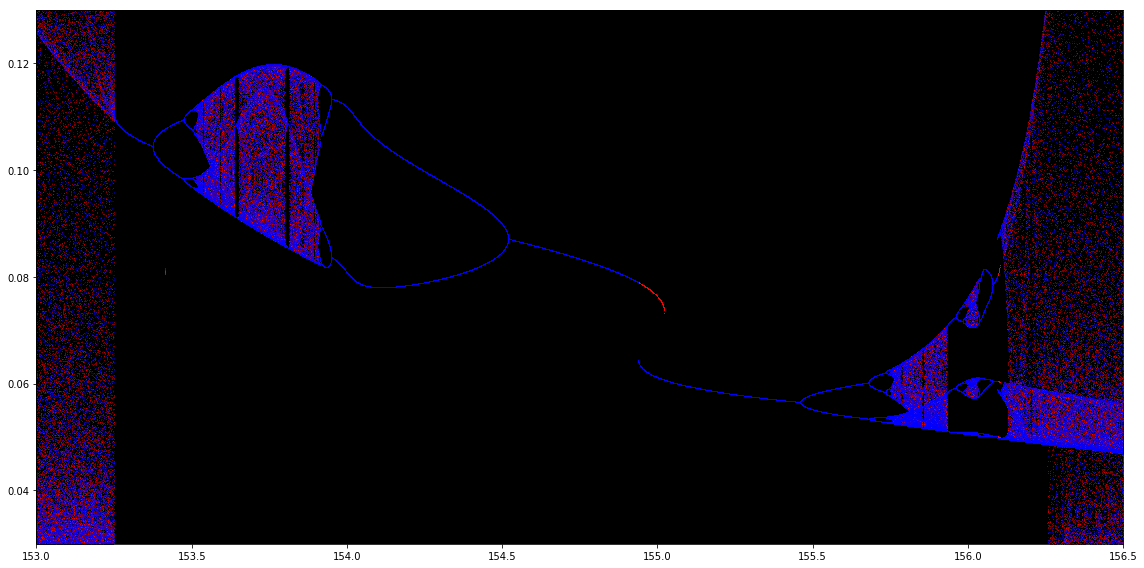}
%\caption{}
\end{subfigure}\qquad
\begin{subfigure}[b]{.9\textwidth}
\includegraphics[width=\textwidth,height=8cm]{./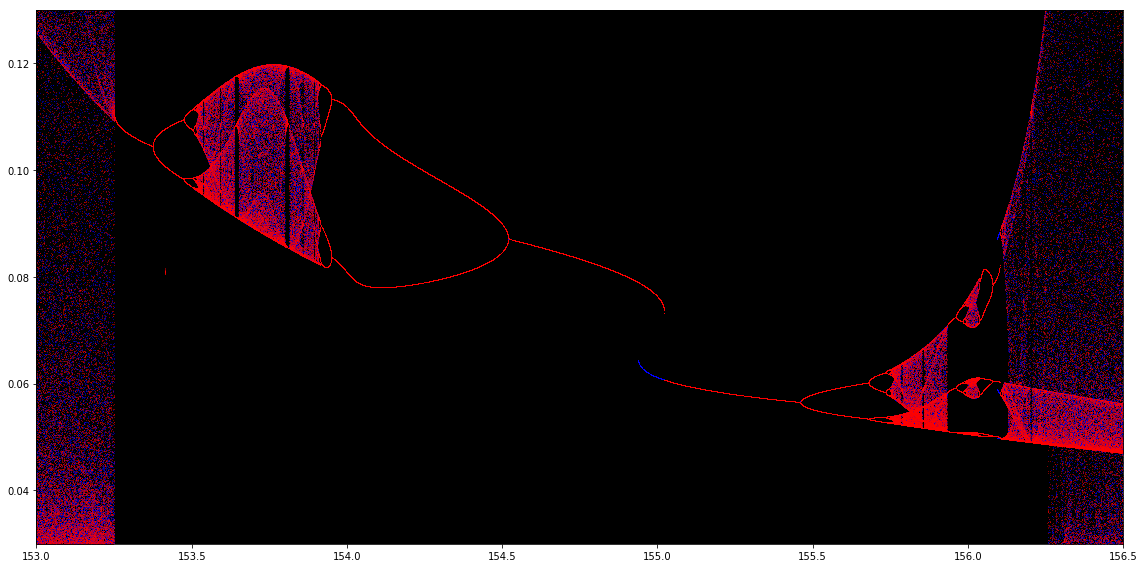}
%\caption{}
\end{subfigure}
\caption{{\bf Locally complex behavior.} The bifurcation diagrams for $f_{a,b}$ where the dynamics is determined by taking (negative) log-barrier as the regularizer for $b=0.61$.  On the horizontal axis  the parameter $a$ is between $153$ and $156.5$, and on the vertical axis values of $f_{a,b}$ are between $0.03$ and $0.13$. As starting points for bifurcation diagrams two critical points of $f_{a,b}$ are taken %(regularity of this map, see Appendix \ref{s:regularity}, guarantees that they will detect all attractors) 
---  red refers to the critical point in $(0,0.5)$ and blue the critical point in $(0.5,1)$. Each critical point is iterated 4000 times, then visualizing the last 200 iterates. On the top picture first red and then blue trajectories are drawn, and on the bottom one  the order is reversed. 
\noindent As $a$ increases chaotic behavior of orbits disappears (around $153.25$). Then,  within the window $[153.25,156.3]$, chaos emerges at $[153.5,154]$ and vanishes. Then trajectories jump, one after the other, and then generate a chaotic attractor which then spreads, vanishes, and finally spreads onto the whole interval.
} 
\label{logbar153157}
\end{figure}

\begin{figure}
\centering
\begin{subfigure}[b]{0.9\textwidth}
\includegraphics[width=\textwidth,height=8cm]{./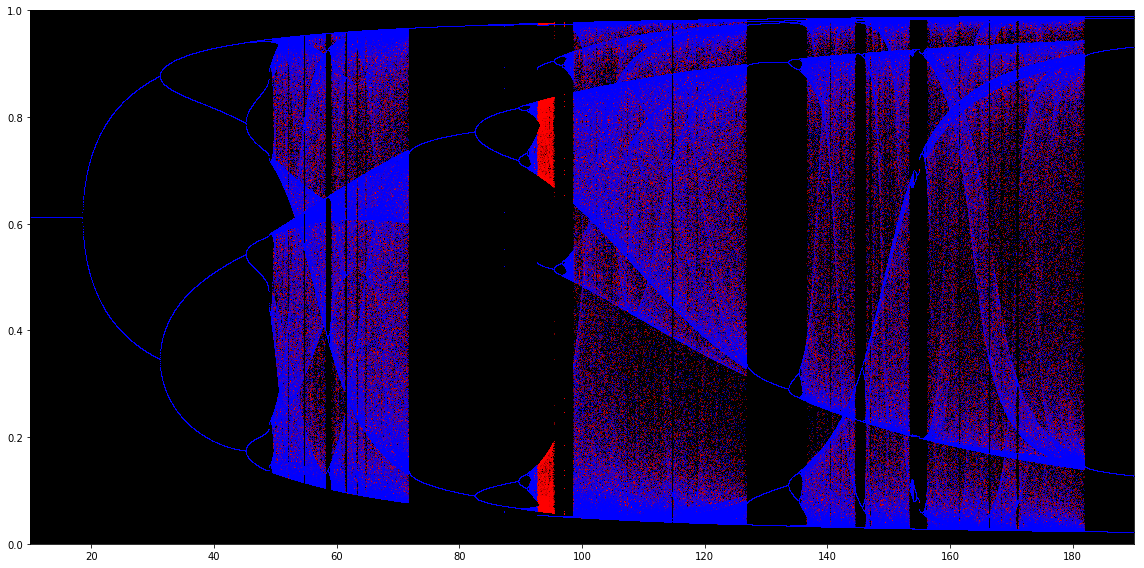}
%\caption{}
\end{subfigure}\qquad
\begin{subfigure}[b]{.9\textwidth}
\includegraphics[width=\textwidth,height=8cm]{./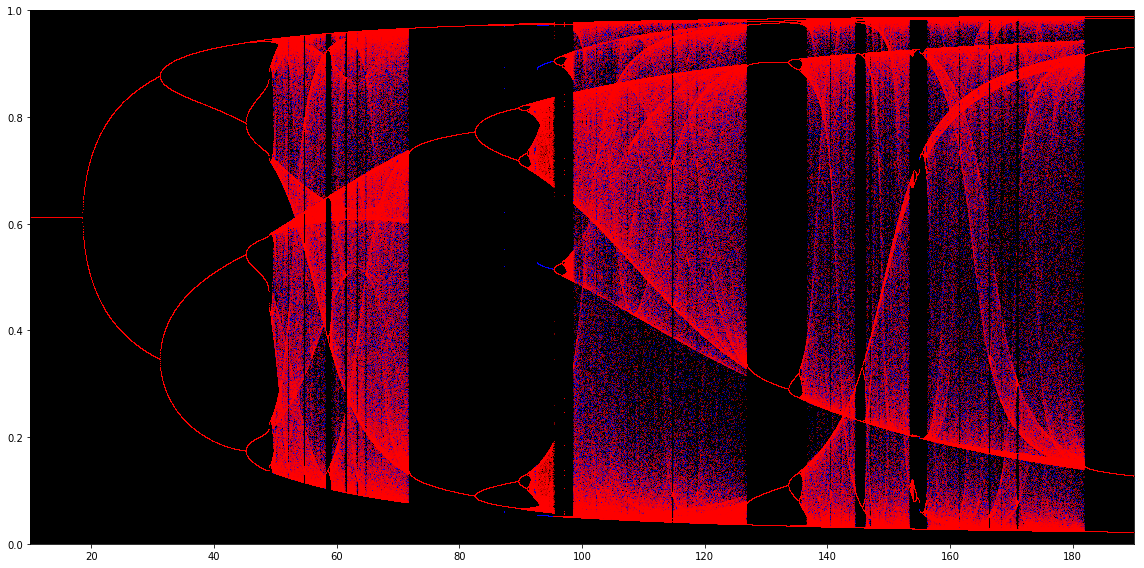}
%\caption{}
\end{subfigure}
\caption{{\bf Period-doubling road to chaos.} The bifurcation diagrams for $f_{a,b}$ where the dynamics is determined by taking (negative) log-barrier as the regularizer: $r(x)=-\log x-\log(1-x)$ for   
$b=0.61$. On the horizontal axis  the parameter $a$ is between $10$ and $190$, and on the vertical axis the value of $f_{a,b}$ ranges between $0$ and $1$. As starting points for bifurcation diagrams two critical points of $f_{a,b}$ are taken (regularity of this map, see Appendix \ref{s:regularity}, guarantees that by studying their trajectories we visit all attractors) ---  red refers to the left critical point (in $(0,0.5)$) and blue to the right critical point (in $(0.5,1)$). Each critical point is iterated 4000 times, visualizing the last 200 iterates. On the top picture first red and then blue trajectories are drawn, and on the bottom one  the order is reversed. 
\noindent The first bifurcation takes place at the moment when the Nash equilibrium $b$ becomes repelling. Then we observe period-doubling route to chaos. In addition two different attractors are visible for $a\in (92,96)$. }
\label{figLB10210}
\end{figure}

\begin{figure}
\centering
\begin{subfigure}[b]{0.9\textwidth}
\includegraphics[width=\textwidth,height=8cm]{./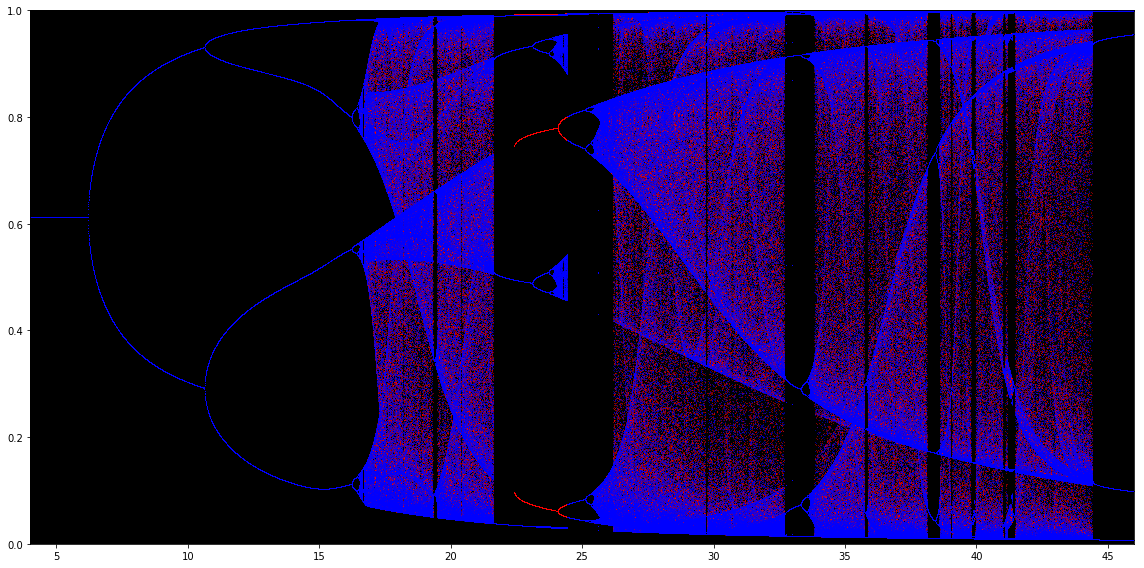}
%\caption{}
\end{subfigure}\qquad
\begin{subfigure}[b]{.9\textwidth}
\includegraphics[width=\textwidth,height=8cm]{./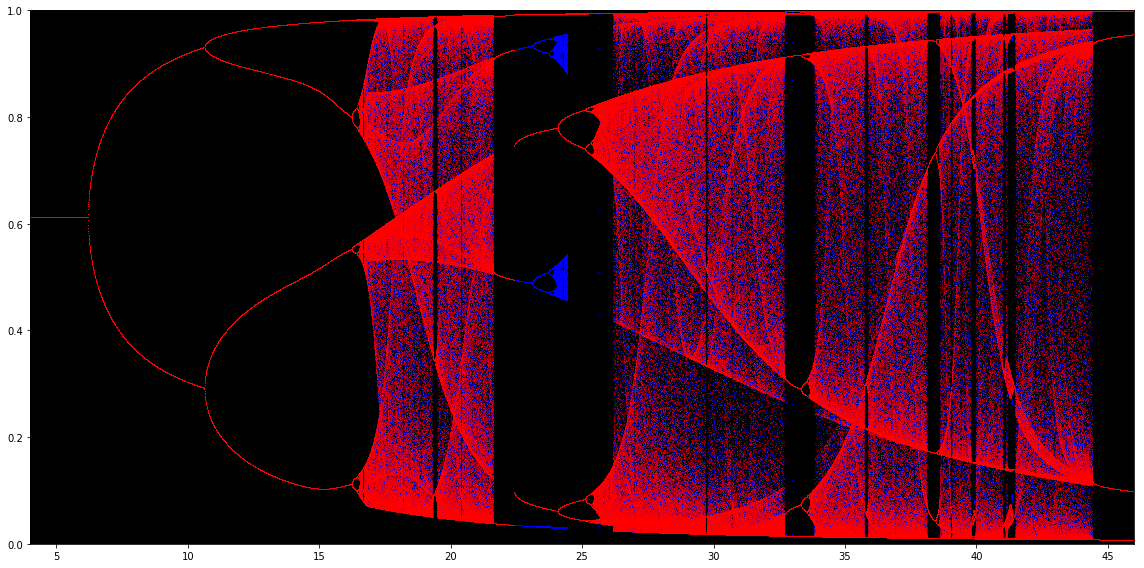}
%\caption{}
\end{subfigure}
\caption{{\bf Period-doubling road to chaos with Havrda-Charv\'{a}t-Tsallis regularizer.} The bifurcation diagrams for $f_{a,b}$ where the dynamics is determined by taking (negative)  Havrda-Charv\'{a}t-Tsallis entropy with $q=0.5$ as the regularizer and  $b=0.61$. On the horizontal axis  the parameter $a$ is between $4$ and $46$, and on the vertical axis values of $f_{a,b}$ ranges between $0$ and $1$. As starting points for bifurcation diagrams two critical points of $f_{a,b}$ are taken ---  red refers to the critical point in $(0,0.5)$ and blue the critical point in $(0.5,1)$. Each critical point is iterated 4000 times, then visualizing the last 200 iterates. On the top picture first red and then blue trajectories are drawn, and on the bottom one  the order is reversed. 
The first bifurcation takes place at the moment when the Nash equilibrium $b$ becomes repelling. Then we observe period-doubling route to chaos. In addition two different attractors are visible for $a\in (22.5,24.5)$.} 
\label{figTs454}
\end{figure}

\begin{figure}
\centering
\begin{subfigure}[b]{0.9\textwidth}
\includegraphics[width=\textwidth,height=8cm]{./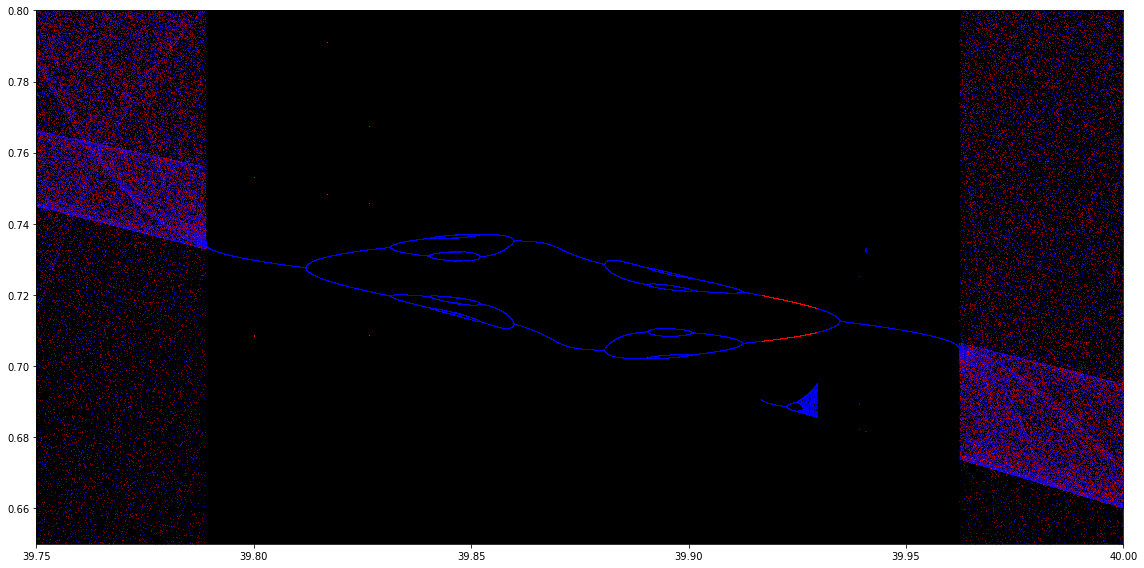}
%\caption{}
\end{subfigure}\qquad
\begin{subfigure}[b]{.9\textwidth}
\includegraphics[width=\textwidth,height=8cm]{./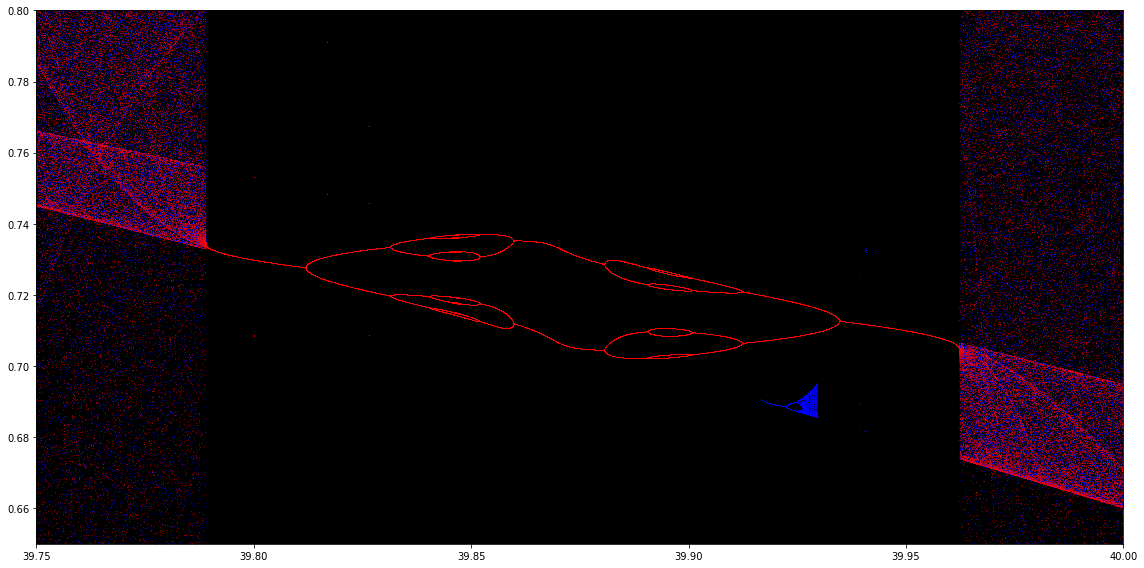}
%\caption{}
\end{subfigure}
\caption{{\bf Period-doubling not always lead to chaos.}The bifurcation diagrams for $f_{a,b}$ where the dynamics is determined by taking (negative) Havrda-Charv\'{a}t-Tsallis entropy with $q=0.5$ as the regularizer, that is, $r(x)=\frac{1}{\sqrt{x}}-\frac{1}{\sqrt{1-x}}$. We fix $b=0.61$. On the horizontal axis  the parameter $a$ is between $39.75$ and $40$, and on the vertical axis values of $f_{a,b}$ are between $0.65$ and $0.8$. As starting points for bifurcation diagrams two critical points of $f_{a,b}$ are taken ---  red refers to the critical point in $(0,0.5)$ and blue the critical point in $(0.5,1)$. Each critical point is iterated 4000 times, then visualizing the last 200 iterates. On the top picture first red and then blue trajectories are drawn, and on the bottom one  the order is reversed. 
As $a$ increases both trajectories go through the same forward and backward period doubling steps. Then, as $a$ increases from $39.915$ to $39.93$, the trajectory of the right critical point escapes the attractor which she shared with the trajectory of the left critical point, and builds separate chaotic attractor. Then it jumps back to the red attractor.}
\label{figTs397540}
\end{figure}

%\newpage

\section{Conclusion}

We study FoReL dynamics in non-atomic congestion games with arbitrarily small but fixed step-sizes, rather than with decreasing and regret-optimizing step-sizes. Our model allows for agents that can learn over time (e.g., by tracking the cumulative performance of all actions to inform about their future decisions), while being driven by opportunities for    
 %care more about their 
short-term rewards, rather than only by long-term asymptotic guarantees. As a result, we can study the effects of increasing system demand and delays on agents' responses, which can become steeper as they are increasingly agitated by the increasing costs. Such assumptions are well justified from a behavioral game theory perspective~\cite{EWA1,camerer2011behavioral}; however, FoReL dynamics are pushed outside of the standard parameter regime in which classic black-box regret bounds do not apply meaningfully.
Using tools from dynamical systems, we show that, under sufficiently large demand,  dynamics will unavoidably become chaotic and unpredictable. Thus, our work vastly generalizes previous results that hold in the special case of Multiplicative Weights Update~\cite{palaiopanos2017multiplicative,Thip18,CFMP2019}. We also report a variety of undocumented complex behaviors such as the co-existence of a locally attracting Nash equilibrium and of chaos \emph{in the same game}. Despite this behavioral complexity of the day-to-day behavior,  the time-average system behavior is \emph{always} perfectly regular, converging to an exact equilibrium. 
Our analysis showcases that local stability in congestion games should not be considered as a foregone conclusion and paves the way toward further investigations at the intersection of optimization theory, (behavioral) game theory, and dynamical systems.

\section*{Acknowledgements}

Georgios Piliouras acknowledge AcRF Tier 2 grant 2016-T2-1-170, grant PIE-SGP-AI-2018-01, NRF2019-NRF- ANR095 ALIAS grant and NRF 2018 Fellowship NRF-NRFF2018-07.
Fryderyk Falniowski acknowledges the support of the
National Science Centre, Poland, grant 2016/21/D/HS4/01798 
and COST Action CA16228 ``European Network for Game Theory''.
Research of Micha{\l} Misiurewicz was partially supported by grant
number 426602 from the Simons Foundation.
Jakub Bielawski and Grzegorz Kosiorowski acknowledge support from a subsidy granted to Cracow University of Economics. Thiparat Chotibut acknowledges a fruitful discussion with Tanapat Deesuwan, and was partially supported by grants for development of new faculty staff, Ratchadaphiseksomphot endownment fund, and Sci-Super VI fund, Chulalongkorn University.

\bibliographystyle{abbrv} %{acm}
\bibliography{./FoReL_arxiv_submission.bib}

\newpage

\appendix
%\section*{Figures}

\label{s:appendix}

%\section{Bifurcation diagram}
%\label{s:bifur_figure}
\begin{center}
	\Large{{\bf Appendices}} 
	%\large{{\it NeurIPS} paper id 2761}
\end{center}
\section{Generalized entropies as regularizers}
\label{sec: entropies}
We present information measures which are often used as regularizers.
\begin{example}[Shannon entropy] \label{SE}
%We start with the Shannon entropy as the regularizer.
Let $R(x,y)=-H_S(x,y)$, where
\[H_S(x,y)=-x\log x-y\log y.\]

 Then $H_S(x,1-x)$ is the Shannon entropy of a probability distribution $(x,1-x)$ and
\[
r(x)=R(x,1-x)=-H_S(x,1-x)=x\log x + (1-x)\log (1-x).
\]

From \[r'(x)=\log \frac{x}{1-x}\] we observe that $r\in \Reg$.\footnote{By substituting the (negative) Shannon Entropy as $R$ into (4) we obtain the Multiplicative Weights Update algorithm.}
\end{example}

%Shannon entropy is a (expected) mean of individual informations $I_k=-\log p_k$.
\begin{example}[Arimoto entropies] \label{arimoto}
We consider the class of Arimoto entropies \cite{Csiszar2008}, that is functions defined as \[H_{\eta}(x,y)=\eta(x)+\eta(y),\] where $\eta\in \mathcal{C}^2((0,1))$ is a concave function.\footnote{In the decision theory Arimoto entropies correspond to separable Bregman scores \cite{GD2004}.} %\footnote{we can assume only that $\eta$ is concave}

We define \[R(x,y)=-H_{\eta}(x,y).\]

Then, by its definition, $R\in \mathcal{C}^2((0,1)^2)$ and $R(y,x)=R(x,y)$.
Moreover,
\[
r(x)=R(x,1-x)=-H_{\eta}(x,1-x)=-\eta(x)-\eta(1-x),
\]
\[ r'(x)=-\eta'(x)+\eta'(1-x) \text{ and } r''(x)=-\eta''(x)- \eta''(1-x).\] Thus, $r$ is convex and the limit $\lim_{x\to 1^-} \eta'(x)$ is finite. Therefore, the condition for $R=-H_{\eta}\in \Reg$ is steepness of $\eta$ at zero:
 \begin{equation}\label{eta0}
\lim_{x\to 0^+} \eta'(x)=\infty.
\end{equation}
Hence, $R=-H_{\eta}\in\Reg$  if and only if $\eta$ satisfies \eqref{eta0}.
\end{example}
%\begin{example}
Several well-known regularizers are given by (negative) Arimoto entropies satisfying \eqref{eta0}. For instance, the Shannon entropy from Example \ref{SE}  is an Arimoto entropy
 for $\eta(x)=-x\log x$, as well as log-barrier regularizer obtained from $\eta(x)=\log x$.  Another widely used (especially in statistical physics) example of Arimoto entropy is the Havrda-Charv\'{a}t-Tsallis entropy.\footnote{This entropy (called also entropy of degree $q$) was first introduced by Havrda and Charv\'{a}t \cite{HC} and used to bound probability of error for testing multiple hypotheses. In statistical physics it is known as Tsallis entropy, referring to \cite{Tsallis}.}%If $R(x,y)=-H_S(x,y)$ then
%$$ r(x)=R(x,1-x)=x\log x + (1-x)\log (1-x);\ r'(x)=\log \frac{x}{1-x}. $$

\begin{example}[Havrda-Charv\'{a}t-Tsallis entropies] \label{HCTent}
The Havrda-Charv\'{a}t-Tsallis entropy for $q\in (0,\infty)$ is defined as

\begin{equation} \label{Hq} H_{q}(x,y)=\begin{cases}  \frac{1}{1-q}(x^q+y^q-1) & \text{for } q\neq 1\\ H_S(x,y) & \text{for } q=1 \end{cases}. \end{equation}

$H_q$ is an Arimoto entropy for $\eta(x)= \frac{1}{1-q}\left(x^q-\frac 12\right)$, satisfying \eqref{eta0} for $0<q<1$. If $R(x,y)=-H_q(x,y)$ then

\[
r(x)=R(x,1-x)= \frac{1}{q-1}(x^q+(1-x)^q-1)\;\text{and}\; r'(x)=\frac{q}{q-1}\left(x^{q-1}-(1-x)^{q-1}\right),
\]
 and $r\in \Reg$ for $q\in (0,1]$.

For $q>1$ the Havrda-Charv\'{a}t-Tsallis entropy does not satisfy \eqref{eta0} and, consequently, the regularizer $R$ emerging from the Havrda-Charv\'{a}t-Tsallis entropy does not belong to $\Reg$.
Standard non-example is Euclidean norm, which we get from \eqref{Hq} when  $q=2$. Then \[r(x)=R(x,1-x)=-H_{2}(x,1-x)=  x^2+(1-x)^2-1\] and as $\lim_{x\to 0^+}r'(x)=-2$, $R$ doesn't belong to $\Reg$.
\end{example}

 Evidently there exist functions which are not Arimoto entropies but also generate regularizers that belong to $\Reg$, one of them being the R\'{e}nyi entropy of order $q<1$.

\begin{example}[R\'{e}nyi entropies]  The Shannon entropy represents an expected mean of individual informations of the form $I_k=-\log p_k$. R\'{e}nyi \cite{Renyi} introduced alternative information measures, namely generalized means $g^{-1}(\sum p_kg(I_k))$, where $g$ is a continuous, strictly monotone function.%[[[ and which satisfy additivity]]].
Then, the R\'{e}nyi entropy of order $q\neq 1$ correspond to $g(x)=\exp ((1-q)x)$, namely:

\[
H_{q}^R(x,y)= \begin{cases} \frac{1}{1-q} \log \left(x^q+y^q \right), & \text{for } q\neq 1\\
H_S(x,y), & \text{for } q=1
\end{cases}.
\]
As the variables $x$ and $y$ are not separable, this is not an Arimoto entropy. However, for $R(x,y)=-H_q^R(x,y)$, $R\in \mathcal{C}^2((0,1)^2)$ and $R(y,x)=R(x,y)$.
Moreover,
\[
r(x)=R(x,1-x)=-H_q^R(x,1-x)=\frac{1}{q-1} \log \left( x^q+(1-x)^q \right)
\]
and

\[
r'(x)=\frac{q}{q-1}\cdot \frac{x^{q-1}-(1-x)^{q-1}}{x^q+(1-x)^{q}}.
\]

Thus, for $q\in (0,1)$ we know that $r''(x)>0$ on $(0,1)$ and $\lim_{x\to 0^+}r'(x)=-\infty$. Because $H_1^R=H_S$ we infer that  $R\in \Reg$ for $q \in (0,1]$.
%[[[From now on we assume that $\gamma\in (0,1)$. We know that then $\frac{\gamma}{\gamma-1}<0$ and $r_1(x)=x^{\gamma-1}-(1-x)^{\gamma-1}$ is strictly decreasing on $(0,1)$, with $r_1(x)>0$ for $x\in (0,\frac{1}{2})$ and $r_1(x)<0$ for $x\in (\frac{1}{2},1)$ and $r_2(x)=x^\gamma+(1-x)^{\gamma}$ is strictly decreasing on $(0,1)$. Thus, $r'$ is strictly increasing on $(0,1)$ and $r''(x)>0$ on $(0,1)$. Because $\lim_{x\to 0^+}r'(x)=-\infty$ we infer that  $R\in \Reg$ for $\gamma\in (0,1]$.]]]
\end{example}

\section{Regularity of log-barrier dynamics}
\label{s:regularity}
To understand better the phenomenon discussed in Section \ref{s:experimental}, let us investigate regularity of $f_{a,b}$. %It is clear that it is analytic. 
Nice properties of interval maps are guaranteed by the negative Schwarzian derivative. Let us recall
that the Schwarzian derivative of $f$ is given by the formula
\[
Sf=\frac{f'''}{f'}-\frac32\left(\frac{f''}{f'}\right)^2.
\]
A ``metatheorem'' states that almost all natural noninvertible
interval maps have negative Schwarzian derivative. Note that, by Lemma \ref{lem33}.\ref{commute}, if $a\le
-\Psi'(b)$ then  $f_{a,b}$ is a homeomorphism, so we should not expect negative
Schwarzian derivative for that case.
For maps with negative Schwarzian derivative each attracting or
neutral periodic orbit has a critical point in its immediate basin of
attraction. Thus, if we show that the Schwarzian derivative is negative, then we will know that all periodic orbits can be find by studying behavior of critical points of $f_{a,b}$.
Therefore, we want to show that $Sf_{a,b}<0$ for sufficiently large $a$ for $f_{a,b}$ determined by log-barrier regularizer.

In general, computation of Schwarzian derivative may be very complicated. However, there is a useful formula
\begin{equation}\label{Sdf}
 S(h \circ f) = (f')^2 \left( (Sh) \circ f \right) + Sf.
\end{equation}

The function $f_{a,b}$ is given by \eqref{eq:dyn2}.
%In order to be able to use the formula for the Schwarzian derivative  we can assume that the function $\Psi$ is differentiable at least three times.
Consider
\[
 g(x) := (\Psi \circ f_{a,b})(x) = \Psi(x) + a(x-b).
\]
By \eqref{Sdf} we have that \[Sg = (f_{a,b}')^2 \left( (S\Psi) \circ f_{a,b} \right) + Sf_{a,b}.\] At the same time \[Sg(x) = S(\Psi(x) + a(x-b)).\] Therefore,
\begin{equation}\label{Sdfab}
(f_{a,b}'(x))^2 \left( (S\Psi) \circ f_{a,b}(x) \right) + Sf_{a,b}(x) = S(\Psi(x) + a(x-b)).
\end{equation}

%\noindent {\bf Tsallis entropy with $q \in (0,1)$.} Let the dynamics of $f_{a,b}$ be determined by taking (negative) Tsallis entropy with $q \in (0,1)$ as the regularizer. Then

Direct computations yield
\[ S\Psi(x) = \frac{6}{\left( x^2 + (1-x)^2 \right)^2} > 0\]
\begin{minipage}{\textwidth}
and
\[
 S(\Psi(x) + a(x-b)) = \frac{6\left[ 1 - a(x^4 + (1-x)^4) \right]}{\left[ x^2 + (1-x)^2 - ax^2(1-x)^2 \right]^2}.
\]
Observe that $x^4 + (1-x)^4 \geqslant \frac{1}{8}$ for all $x \in [0,1]$. Thus $1 - a(x^4 + (1-x)^4) \leqslant 1 - \frac{a}{8}$, and %As a consequence, we have that
\begin{equation}\label{SdLog_barrier2}
 S(\Psi(x) + a(x-b)) < 0 \quad \text{for } a > 8.
\end{equation}
Therefore, %by \eqref{Sdfab}, \eqref{SdLog_barrier1} and \eqref{SdLog_barrier2}  
 $Sf_{a,b} < 0$ for $a > 8$.
Moreover, \[\max_{b\in [0,1]}\Psi'(b)=\Psi'(1/2)=-8.\] Thus,  \[ Sf_{a,b}(x)<0 \text{ for all } a>-\Psi'(b)\geq 8.\]

\end{minipage}

%\begin{comment}

%\end{comment}

\end{document}